\numberwithin{equation}{section}
 \newtheorem{thm}{Theorem}[section]
 \newtheorem{prop}{Proposition}[section]
 \theoremstyle{plain}
 \newtheorem{lem}[thm]{Lemma} 
 \theoremstyle{definition}
 \newtheorem{defn}[thm]{Definition}
 \theoremstyle{remark}
 \newtheorem{rem}[thm]{Remark}
\DeclareMathOperator{\Tr}{Tr}
\newtheorem{definition}{Definition}[section]
\newcommand{\sumai}{\sum_{i=1}^N}
\newcommand{\sumaj}{\sum_{j=1}^N}
\newcommand{\sumak}{\sum_{k=1}^N}
\newcommand{\sumajnoi}{\sum_{j \neq i}^N}
\newcommand{\sumaknoi}{\sum_{k \neq i}^N}
\newcommand\keywords[1]{%
    \begingroup
    \let\and\\
    \par
    \noindent\emph{Keywords:} #1\par
    \endgroup
}
\newcommand\msc[1]{%
    \begingroup
    \let\and\\
    \par
    \noindent\emph{2010 MSC:} #1\par
    \endgroup
}
\author{P. Amster$^{1,2}$ and A.P. Mogni$^1$}
\title{On a pricing problem for a multi-asset option \\ with general transaction costs}
\begin{document}

\date{}
\maketitle
\begin{center}
$^1$ Departamento de Matem\'atica, \\
Facultad de Ciencias Exactas y Naturales\\
Universidad de Buenos Aires
and \\
$^2$ IMAS - CONICET\\
Ciudad Universitaria, Pabell\'on I,
1428 Buenos Aires, Argentina \\
{\sl E-mails}: pamster@dm.uba.ar --- amogni@dm.uba.ar
\end{center}

%
%


\begin{abstract}

\noindent We consider a Black-Scholes type equation arising on a pricing model for a multi-asset option with general transaction costs. 
The pioneering work of Leland is thus extended in two different ways: on the one hand, the problem is multi-dimensional since it involves different underlying assets; on the other hand, the transaction costs are not assumed to be constant (i.e. a fixed proportion of the traded quantity). In this work, we generalize Leland's condition and prove the existence of a viscosity solution for the corresponding fully nonlinear initial value problem using Perron method. Moreover, we develop a numerical ADI scheme to find an approximated solution. We apply this method on a specific multi-asset derivative and we obtain the option price under different pricing scenarios.\\

\keywords{Nonlinear parabolic differential equations, Option pricing models, Leland model, Transaction costs, Perron method, ADI splitting scheme}

\msc{35K20, 35K55, 91G20, 91G60}

\end{abstract}

\maketitle
\section{Introduction}

\noindent The Black-Scholes model \cite{black1973pricing} relies on different assumptions such as constant values of volatility and interest rates, the non-existence of dividend yields, the efficiency of the markets and the non-existence of transaction costs, among others. Following Leland's approach \cite{leland1985option}, transaction costs can be included in the pricing methodology by applying a discrete-time replicating strategy. A nonlinear partial differential equation is obtained for the option price, which is denoted by $V\left(S,t\right)$; namely,

\begin{align}
\frac{\partial V}{\partial t}+\frac{1}{2} \hat{\sigma} \left( S \frac{\partial^2 V}{\partial S^2} \right)^2 S^2  \frac{\partial^2 V}{\partial S^2} +rS\frac{\partial V}{\partial S}-rV=0,
\end{align}

\noindent where $\hat{\sigma}$ is defined based upon the transaction costs function. 
For example, if transaction costs are defined by a constant rate $C_0$, then $\hat{\sigma}$ is given by

\[
\hat{\sigma} 
\left( S \frac{\partial^2 V}{\partial S^2} \right)^2=
\sigma^2 \left(1-\hbox{Le} \, \hbox{sgn}\left(S \frac{\partial^2 V}{\partial S^2} \right) \right) = \left\{
													\begin{array}{ccc}
													\sigma^2 \left(1-\hbox{Le}\right) & \hbox{if} & \frac{\partial^2 													V}{\partial S^2}>0\\
													\sigma^2 \left(1+\hbox{Le}\right) & \hbox{if} & \frac{\partial^2 													V}{\partial S^2}<0\\
													\end{array}
													\right.
\]

\noindent where $\hbox{Le}=\sqrt{\frac{2}{\pi}}\frac{C_0}{\sigma\sqrt{\Delta t}}$ is the Leland number.\\
 
\noindent The original approach was extended by different authors. A discrete approximation is studied in \cite{boyle1992option} by developing a binomial option pricing model with constant transaction costs. The generalization of Leland's methodology for a portfolio of options is presented in \cite{hoggard1994hedging} and the existence of solution is studied in \cite{imai2006hoggard}. In \cite{grossinho2009note}, a method of upper and lower solutions is used to study the original stationary problem. Also, an analysis of the original hedging strategy is found in \cite{grandits2001leland} and a modification of the strategy is considered in \cite{lepinette2012modified} to guarantee that the approximation error vanishes in the limit.\\

\noindent Different choices of transaction costs functions lead to variations on the nonlinear term of the partial differential equation. In \cite{amster2005black}, the authors propose a non-increasing linear function and find solutions for the stationary problem. In \cite{vsevvcovivc2016analysis}, the concept of transaction costs function is generalized and the so-called \emph{mean value modification of the transaction costs function} is developed. This transformation allows the authors to formulate a general one-dimensional Black-Scholes equation by solving the equivalent quasilinear Gamma equation. Moreover, viscosity solutions have been studied in the nonlinear problems that arises from including transaction costs in the option pricing framework . The seminal work of \cite{davis1993european} finds the option price by comparing the maximum utilities available to the writer   leading to solve two stochastic optimal control problems. Unique viscosity solutions are found as the value functions of these problems. Moreover, the work of \cite{barles1998option} uses a utility function with an asymptotic analysis of partial differential equations to quantify the dependence on preferences in European call option problem.\\

\noindent The main distinctive aspect in the above-cited works is that they all consider only one asset within the partial differential equation. In \cite{zakamouline2008hedging} and \cite{zakamulin2008option}, the author generalizes the Leland approach in order to cover different types of multi-asset options, developing the nonlinear partial differential equation and solving numerically a list of examples.\\

\noindent In this work, we prove the existence of a viscosity solution for the problem of pricing a multi-asset option with a general transaction costs function. We derive the following nonlinear problem

\begin{align}
-V_{\tau}+\mathcal{L}V = G\left(V\right) \quad  &\hbox{in} \quad \Omega\times\left[0,T\right] \nonumber \\
V\left(x_1,...,x_N,0 \right)= V_0\left(x_1,...,x_N\right) \quad &\hbox{in} \quad \Omega \label{problema_completo}
\end{align}

\noindent where $\Omega = \mathbb{R}^N$, $V$ is the option price, $\mathcal{L}$ is an elliptic operator, $G$ is a nonlinear term and $V_0$ is the initial condition. This problem can be rewritten in terms of a nonlinear elliptic operator $F$ as

\begin{align}
-V_{\tau}+FV = 0 \quad  &\hbox{in} \quad \Omega\times\left[0,T\right] \nonumber \\
V\left(x_1,...,x_N,0 \right)= V_0\left(x_1,...,x_N\right) \quad &\hbox{in} \quad \Omega \label{problema_with_F}
\end{align}

\noindent This presentation helps us to introduce the Perron method to find a viscosity solution. Indeed, in our work we show that a generalization of Leland's condition is required such that the nonlinear operator $F$ becomes degenerate elliptic and a solution can be found. By defining properly the sub and supersolutions of problem \eqref{problema_with_F} and recalling a comparison principle, we use Perron method to derive the existence of solution.\\

\noindent In the second part of the work, we develop a numerical approach in order to find a solution using an iterative method. For this purpose, the Alternating Difference Implicit (ADI) scheme is selected within the family of splitting operators. Different works \cite{in2010adi,in2007stability,mckee1996alternating,mckee1970alternating} study the applicability of this approach to deal with the mixed derivatives terms of the discretization. On multidimensional problems, the ADI method allows to solve efficiently the PDE problem by applying a tridiagonal matrix algorithm in comparison to the classical Crank-Nicholson scheme. In this section we provide results regarding the convergence of the numerical scheme, the sensitivity of the final output to the choice of timing parameters and the impact of the transaction costs in the option price.\\

\noindent The structure of the paper is as follows. In Section 2 we derive the nonlinear PDE that explains the dynamics of the option price for a multi-asset derivative considering a general transaction costs function. In Section 3 we apply all the necessary steps to prove the existence of a viscosity solution using Perron method. Finally, in Section 4 we develop the ADI framework in order to find a strong solution and price a specific multi-asset derivative. 

\section{PDE derivation for multiple assets and general transaction costs function}

\noindent Let $\Pi$ be the portfolio that contains $\delta_i$ of asset $S_i$ and an option $V$ over those assets at time $t$. This portfolio can be represented by 
\begin{equation}
\Pi=V+\sumai \delta_i S_i. \label{Pi}
\end{equation}

\noindent If we define $\Delta$ as the one-step variation of a process (i.e $\Delta y_t=y_t - y_{t-1}$), by applying the It\^{o}'s formula over $V$, we get

\begin{equation}
\Delta V = \frac{\partial V}{\partial t} \Delta t + \sumai \frac{\partial V}{\partial S_i} \Delta S_i + \frac{1}{2} \sumai \sumaj \sigma_i \sigma_j \rho_{ij} S_i S_j \frac{\partial^2 V}{\partial S_i \partial S_j} \Delta t. \label{ito}
\end{equation}

\noindent Transaction costs appear when calculating $\Delta \Pi$, which expresses the variation of the portfolio at each time $t$. Specifically, the variation of the portfolio is represented by 
\begin{equation}
\Delta \Pi=\Delta \left(V+\sumai \delta_i S_i \right)+\sumai \Delta TC_i,
\end{equation}

\noindent where $\Delta TC_i$ is the amount of transaction costs when buying or selling $\delta_i$ assets of $S_i$. By taking $\delta_i=-V_{S_i}$, we obtain  

\begin{equation}
\Delta \Pi = \Delta V - \sumai \frac{\partial V}{\partial S_i} \Delta S_i - \sumai \Delta TC_{i}.\label{dpi}
\end{equation}

\noindent Following the approach in \cite{vsevvcovivc2016analysis}, 
it is seen that 

\begin{equation}
\Delta TC_{i}= S_i \, C\left(\left| \Delta \delta_i \right| \right) \left| \Delta \delta_i \right|,\label{dv}
\end{equation}

\noindent where $C$ is the transaction costs function. By defining $r_{TC}^i$ to be the expected value of the change of the transaction costs per unit time interval $\Delta t$ and price $S_i$, we see that

$$
r_{TC}^i = \frac{E \left[ \Delta TC_i \right]}{S_i \Delta t} = \frac{E\left[ C\left(| \Delta \delta_i | \right) \left| \Delta \delta_i \right| \right]}{\Delta t}.
$$
\noindent Thus, we approximate the transaction costs by the expected value of the transaction costs function applied to the amount of assets bought or sold and multiplied by these amount again. This value is then multiplied by the price of asset $S_i$ in order to get a transaction cost in money terms.\\

\noindent Applying (\ref{dv}) in (\ref{dpi}) and using $\Delta V$, we obtain

\begin{equation}
\Delta \Pi = \left( \frac{\partial V}{\partial t} + \frac{1}{2} \sumai \sumaj \sigma_i \sigma_j \rho_{ij} S_i S_j \frac{\partial^2 V}{\partial S_i \partial S_j} \right) \Delta t - \sumai S_i \, r_{TC}^i \, \Delta t. \label{dpimas}
\end{equation}

\noindent 
From the assumption $\Delta \Pi = r\Pi \Delta t$ and 
(\ref{dpimas}), we obtain

\begin{equation}
rV + \sumai r^i_{TC} \, S_i =  \frac{\partial V}{\partial t} + \frac{1}{2} \sumai \sumaj \sigma_i \sigma_j \rho_{ij} S_i S_j \frac{\partial^2 V}{\partial S_i \partial S_j} + r \sumai \frac{\partial V}{\partial S_i} S_i \label{eq1}
\end{equation}

\noindent where $r_{TC}^i S_i = \frac{E \left[ \Delta TC_i \right]}{\Delta t} = \frac{E\left[ C\left( | \Delta \delta_i | \right) | \Delta \delta_i | S_i \right]}{\Delta t}$.\\

\noindent Equation (\ref{eq1}) is the nonlinear PDE that represents the behaviour of the option price for a multi-asset option when defining  a general transaction costs function. In order to get the complete expression of the PDE, we have to calculate $\delta_i$. From previous steps we know that

$$
\Delta \delta_i = - \Delta \frac{\partial V}{\partial S_i} \sim \sumaj \frac{\partial^2 V}{\partial S_i \partial S_j} \Delta S_j
$$
\noindent taking only the terms with order ${\Delta t}^{1/2}$. Noting that 

$$
\Delta S_j \sim \sigma_j S_j \phi_j \sqrt{\Delta t},
$$
\noindent with $\phi_j$ being a standard normal variable, we find that

$$
\left| \Delta \delta_i \right| = \left| \sumaj \frac{\partial^2 V}{\partial S_i \partial S_j} \Delta S_j \right| =  \left| \sumaj \frac{\partial^2 V}{\partial S_i \partial S_j} \sqrt{\Delta t} \, \sigma_j \, S_j \, \phi_j \, \right| = \sqrt{\Delta t} \, \left| \sumaj \frac{\partial^2 V}{\partial S_i \partial S_j} \, \sigma_j \, S_j \, \phi_j \, \right|.
$$
\noindent Setting $\Phi_i= \sumaj \frac{\partial^2 V}{\partial S_i \partial S_j} \, \sigma_j \, S_j \, \phi_j $, we obtain
that $\Phi_i \sim N\left(0,\Theta_i \right)$ with

\begin{align}
\Theta_i =\sumaj \sumak \frac{\partial^2 V}{\partial S_i \partial S_j} \frac{\partial^2 V}{\partial S_i \partial S_k} \sigma_j \sigma_k \rho_{jk} S_j S_k .   \label{theta1}
\end{align}

\noindent where $\rho_{jk}$ is the correlation parameter between $\phi_j$ and $\phi_k$. Therefore,

\begin{equation}
r_{TC}^i S_i = \frac{E \left[ \Delta TC \right]}{\Delta t} = \frac{E\left[ C\left( | \Delta \delta_i | \right) | \Delta \delta_i | S_i \right]}{\Delta t} = \frac{\sqrt{\Delta t} \, E\left[ C\left( \sqrt{\Delta t} \, \left| \Phi_i \right|  \right)  \left| \Phi_i \right| \ S_i \right]}{\Delta t} = \frac{S_i}{\sqrt{\Delta t}}E\left[ C\left( \sqrt{\Delta t}  \, \left| \Phi_i \right| \right) \left| \Phi_i \right|  \right]. \label{rtc}
\end{equation}

\noindent Using \eqref{rtc} in \eqref{eq1}, we find the following nonlinear PDE which models the dynamic of a multi-asset option.

\begin{equation}
rV + \sumai \frac{S_i}{\sqrt{\Delta t}}E\left[ C\left( \sqrt{\Delta t}  \, \left| \Phi_i \right| \right) \left| \Phi_i \right|  \right] = \frac{\partial V}{\partial t} + \frac{1}{2} \sumai \sumaj \sigma_i \sigma_j \rho_{ij} S_i S_j \frac{\partial^2 V}{\partial S_i \partial S_j} + r \sumai \frac{\partial V}{\partial S_i} S_i.\\  \label{eq2}
\end{equation}

\section{Existence of solution for the resulting PDE}\label{demostracion}
\subsection{Defining the nonlinear problem}

\noindent Let $C$ be a measurable bounded transaction costs function such that $C:\mathbb{R}_0^{+} \rightarrow \mathbb{R}_0^{+}$, $C \in L^{2}\left(\mathbb{R}_0^{+} \right)$ and let $\overline{C}, \underline{C}>0$ be such that $\underline{C}<C\left(x \right)<\overline{C}$ for every $x \in \mathbb{R}_0^{+}$. Moreover, we denote $\Omega = \mathbb{R}^N, \Omega^+ = \mathbb{R}^N_+, \Omega_T = \left[0,T\right] \times \mathbb{R}^N$ and $\Omega_T^+ = \left[0,T\right] \times \mathbb{R}^N_+$. Let us define $G$ to be the nonlinear operator 

\begin{align}
G\left(S,D^2 V\right) &= \sum_{i=1}^{N} \frac{S_i}{\sqrt{\Delta t}} E\left[C\left(\sqrt{\Delta t} \left| \Phi_i \right| \right) \left| \Phi_i \right| \right]\\
&= \sum_{i=1}^{N} \frac{S_i}{\sqrt{\Delta t}} \sqrt{\frac{2}{\pi}} \, 2 \, \sqrt{\Theta_i} \int_{0}^{+\infty} C\left(\sqrt{\Delta t \, 2 \, \Theta_i } y \right) \,  y \, e^{-y^2}  \, dy
\end{align}

\noindent where $\Theta_i$ is given by

\begin{align}
\Theta_i =\sumaj \sumak \frac{\partial^2 V}{\partial S_i \partial S_j} \frac{\partial^2 V}{\partial S_i \partial S_k} \sigma_j \sigma_k \rho_{jk} S_j S_k .   \label{theta1_sec2}
\end{align}

\noindent where $\rho_{jk}$ is the correlation parameter between $\phi_j$ and $\phi_k$, both standard normal variables. Moreover, let us denote $L$ to be the following parabolic operator

\begin{equation}
L\left(\tau,S,V\right)= - rV -  \frac{\partial V}{\partial \tau} + \frac{1}{2} \sumai \sumaj \sigma_i \sigma_j \rho_{ij} S_i S_j \frac{\partial^2 V}{\partial S_i \partial S_j} + r \sumai \frac{\partial V}{\partial S_i} S_i.\ \label{eq2_sec2}
\end{equation}

\noindent Then, we define the nonlinear PDE for the problem of pricing a multi-asset option with general transaction costs as of

\begin{alignat}{2}
\label{problema_original}
\mathcal{L}\left(\tau,S_1,...,S_N,V\right) &= G\left(S_1,...,S_N,D^2 V\right) \quad  &&\hbox{in} \quad \Omega^+\times\left[0,T\right] \nonumber\\
V\left(0,S_1,...,S_N \right) &= V_0\left(S_1,...,S_N\right) \quad  &&\hbox{in} \quad \Omega^+
\end{alignat}

\noindent Our objective is to find a viscosity solution of problem \eqref{problema_original}. For this purpose, we will rewrite problem \eqref{problema_original} to match with the notation of \cite{imbert2013introduction}. Hence, we redefine our nonlinear parabolic equation as

\begin{align}
\frac{\partial V}{\partial \tau} + F\left(\tau,S,V,DV,D^2V \right) = 0 \label{nonlinear-equation}
\end{align}

\noindent where 

\begin{align}
F\left(\tau,S,V,DV,D^2V \right) =  -\frac{1}{2} \sumai \sumaj \sigma_i \sigma_j \rho_{ij} S_i S_j \frac{\partial^2 V}{\partial S_i \partial S_j} - r \sumai \frac{\partial V}{\partial S_i} S_i + rV + G\left(S,D^2V\right).\label{eq-F}
\end{align}

\begin{rem}
Equation \eqref{nonlinear-equation} can be rewritten following a matricial form. If we denote the matrix $A$ as

\begin{align}
\left(A\right)_{ij} = \sigma_i \sigma_j \rho_{ij} S_i S_j
\end{align}

\noindent then the function $F$ can be set as

\begin{align}
F\left(\tau,S,V,DV,D^2V \right) =  -\frac{1}{2} \, \, tr\left(A \, D^2V \right) - r DV \cdot S + rV + G\left(S,D^2V\right) \label{new_F}
\end{align}

\noindent For the nonlinear term that correspond to the function $G$ we first note that the value of $\Theta_i$ is equivalent to the i-th term of the diagonal of the product $D^2V \, A \, D^2V $, i.e.

\begin{align}
\Theta_i = \left(D^2V \, A \, D^2V \right)_{ii} \label{new_theta}
\end{align}

\noindent Then, the function $G$ noted in a matricial form as of

\begin{align}
G\left(S,D^2 V\right) = \sum_{i=1}^{N} \frac{S_i}{\sqrt{\Delta t}} \sqrt{\frac{2}{\pi}} \, 2 \, \sqrt{\left(D^2V \, A \, D^2V \right)_{ii}} \int_{0}^{+\infty} C\left(\sqrt{\Delta t \, 2 \, \left(D^2V \, A \, D^2V \right)_{ii} } \, y \right) \,  y \, e^{-y^2}  \, dy \label{new_G}
\end{align}

\end{rem}

\subsection{Degenerate Ellipticity and Leland's condition}

\subsubsection{Deriving the conditions}

\noindent We are going to prove the existence of a viscosity solution of problem \eqref{nonlinear-equation} using Perron method. The main idea of the method is to construct a subsolution $u^-$ and a supersolution $u^+$ of the nonlinear parabolic equation such that $u^- \leq u^+$. Moreover, it is possible to construct a subsolution $u$ lying between $u^-$ and $u^+$ and see that the lower semi-continuous envelope of the subsolution $u$ is a supersolution. Before applying Perron method, we need to set different conditions on the nonlinear operator $F$. Let us start by presenting the definition of degenerate ellipticity. For this purpose, we will denote $\mathbb{S}_N$ as the space of N-dimensional square symmetric matrices.

\begin{definition}\label{def-deg-el}
\noindent A nonlinear function $F: \left[0,T\right] \times \Omega^+ \times \mathbb{R} \times \mathbb{R}^N \times \mathbb{S}_N \rightarrow \mathbb{R}$ is degenerate elliptic if 

\begin{align}
X \leq Y \implies F\left(t,x,p,s,X\right) \geq F\left(t,x,p,s,Y\right). \label{deg-el}
\end{align}

\end{definition}

\noindent Given the definition of degenerate ellipticity we have to set the correspondent conditions such that the nonlinear function $F$ follows Condition \eqref{deg-el}. Let us start by denoting the differential of function $F$ with respect to the second derivative component $Y$ as

\begin{align}
D_YF\left(t,x,p,s,B\right) = \frac{\partial F\left(t,x,p,s,Y\right)}{\partial Y}\biggr\rvert_{Y = B}
\end{align}

\noindent By Definition \ref{def-deg-el}, given a positive definite matrix $U$, we want to see that 

\begin{align*}
D_YF\left(t,x,p,s,Y\right)\left(U\right) \leq 0
\end{align*}

\noindent If this condition is fulfilled, we can use the mean value theorem to prove that operator $F$ is degenerate elliptic so

\begin{align}
F\left(t,x,p,s,Y\right) - F\left(t,x,p,s,X\right) &= D_YF\left(t,x,p,s,B\right) \cdot \left(Y-X\right) \label{mvt} \\
 &= 0 \nonumber
\end{align}

\noindent where $B \in \left(X,Y \right)$  and $Y-X$ is a positive definite matrix.\\

\noindent Let us recall the Leland condition which is present in the unidimensional problem with a constant transaction costs function. The aim of the this condition is in fact to define a degenerate elliptic operator such that the matrix of coefficients that correspond to the second derivatives is definite positive. In our work, the generalized Leland condition will act as the same and will be deduced from the following two Lemmas.

\noindent The first Lemma shows that, if the differential matrix $D_YF$ is symmetric, evaluating the differential on any definite positive matrix is equivalent to calculating the trace of the product between the differential matrix and the correspondent definite positive matrix.

\begin{lem}\label{sec-lemma-bicon}
\noindent Let $U$ be a positive definite matrix and $D$ the differential matrix with respect to component Y. Then,  $Tr\left(D \, U\right) = D \left(U \right)$
\end{lem}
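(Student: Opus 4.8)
The plan is to unwind the definition of the differential matrix $D$, identify its action $D(U)$ with a Frobenius pairing, and then reduce the claim to the symmetry of $U$. By definition of the Fr\'echet derivative of $F$ with respect to the matrix variable $Y$, the differential matrix has entries $D_{ij}=\partial F/\partial Y_{ij}$, and ``evaluating the differential on $U$'' means applying this linear functional to the direction $U\in\mathbb{S}_N$, namely
\begin{align}
D(U) = \sum_{i=1}^N \sum_{j=1}^N \frac{\partial F}{\partial Y_{ij}}\, U_{ij} = \sum_{i=1}^N \sum_{j=1}^N D_{ij}\, U_{ij}.
\end{align}
First I would record this identity as the precise meaning of the left-hand object in the statement, so that the rest of the argument is a comparison of two explicit double sums.

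Next I would compute the trace on the other side directly from the definition of the trace of a matrix product:
\begin{align}
\Tr(D\,U) = \sum_{i=1}^N (D\,U)_{ii} = \sum_{i=1}^N \sum_{j=1}^N D_{ij}\, U_{ji}.
\end{align}
Comparing the two displays, the expressions for $D(U)$ and $\Tr(D\,U)$ differ only in the order of the indices on the second factor, $U_{ij}$ versus $U_{ji}$.

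The final step is to invoke the hypotheses. Since $U$ is positive definite it is in particular symmetric, so $U_{ji}=U_{ij}$ for all $i,j$, the two sums coincide, and therefore $\Tr(D\,U)=D(U)$. I would also point out that the symmetry of $D$ (noted in the discussion preceding the statement) is exactly what makes $D$ a legitimate element of $\mathbb{S}_N$ and certifies that the Frobenius pairing $\langle D,U\rangle=\Tr(D^{\top}U)=\Tr(D\,U)$ is the correct representation of the differential on the constrained space of symmetric matrices. This bookkeeping point is the only place where care is genuinely required, because differentiating a scalar function restricted to $\mathbb{S}_N$ can otherwise introduce spurious factors on the off-diagonal entries; once one fixes the convention that $D(U)=\sum_{i,j}D_{ij}U_{ij}$, the identity is an immediate consequence of the definition of the trace together with the symmetry of $U$, and no further obstacle arises.
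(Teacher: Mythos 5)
Your proof is correct and follows essentially the same route as the paper's: both expand $\Tr(D\,U)$ as the double sum $\sum_{i,j}D_{ij}U_{ji}$ and identify $D(U)$ with the Frobenius pairing $\sum_{i,j}D_{ij}U_{ij}$. The only cosmetic difference is that you invoke the symmetry of $U$ (which holds since $U$ is positive definite) to match the two sums, whereas the paper invokes the symmetry of $D$; either suffices, and your added remark about the convention for differentiating on $\mathbb{S}_N$ is a sensible clarification rather than a departure.
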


\begin{proof}
\noindent Let us see that the result follows by using the definition of the Frobenius inner product. From the definition of the the trace of the product between $D$ and the positive definite matrix $U$ and the symmetry of matrix $D$ we have that

\begin{align*}
\Tr\left(D \, U \right) &= \sumai \sumaj D_{ij} U_{ji}\\
&= \sumaj \sumai D_{ji} U_{ji}
\end{align*}

\noindent Now, we can arrange terms so that

\begin{align*}
\Tr\left(D \, U \right)  = D \cdot U = D \left(U\right)
\end{align*}

\end{proof}

\noindent The second Lemma states that we can characterize the sign of the eigenvalues of the differential matrix $D_YF$ in terms of the sign of the trace of the product between $D_YF$ and a definite positive matrix $U$.

\begin{lem}\label{lemma-bicon}
\noindent Let $U$ be a positive definite matrix. Then $D_YF$ is negative definite if and only if $Tr\left(D_YF \, U\right) \leq 0$ for all $U \geq 0$.
\end{lem}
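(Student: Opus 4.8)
The plan is to prove both implications by testing the pairing $\Tr(D\,U)$ against a convenient family of positive semidefinite matrices, exploiting that $D := D_YF$ is symmetric so that Lemma \ref{sec-lemma-bicon} identifies $\Tr(D\,U)$ with the Frobenius pairing $D\cdot U = D(U)$. This reduces the whole statement to the classical duality between negative (semi)definiteness and nonpositivity of the trace pairing against the cone of positive semidefinite matrices.

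For the reverse implication (the ``if'' part) I would test against rank-one matrices. Given any $\xi \in \R^N$, the matrix $U = \xi\xi^T$ is positive semidefinite, and by Lemma \ref{sec-lemma-bicon} one has $\Tr(D\,U) = D\cdot(\xi\xi^T) = \xi^T D\,\xi$. The hypothesis $\Tr(D\,U)\le 0$ for all $U\ge 0$ then yields $\xi^T D\,\xi\le 0$ for every $\xi$, which is exactly the assertion that $D$ is negative (semi)definite.

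For the forward implication (the ``only if'' part) I would diagonalize the test matrix. By the spectral theorem any $U\ge 0$ can be written as $U = \sum_k \lambda_k\, q_k q_k^T$ with orthonormal eigenvectors $q_k$ and eigenvalues $\lambda_k\ge 0$. Linearity of the trace together with Lemma \ref{sec-lemma-bicon} gives $\Tr(D\,U) = \sum_k \lambda_k\, q_k^T D\,q_k$; since $D$ is negative definite each quadratic form $q_k^T D\,q_k$ is $\le 0$, and as every $\lambda_k\ge 0$ the whole sum is $\le 0$, establishing $\Tr(D\,U)\le 0$.

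The main subtlety, which I would flag explicitly, is the gap between \emph{negative definite} and \emph{negative semidefinite}: with the non-strict inequality $\Tr(D\,U)\le 0$ and $U$ ranging over the full cone of positive semidefinite matrices (including $U=0$), the pairing characterizes negative semidefiniteness, not strict definiteness. To recover strict negative definiteness one must either impose the strict inequality $\Tr(D\,U)<0$ for all nonzero $U\ge 0$, or restrict the rank-one test in the reverse direction to $\xi\neq 0$. This is the only place where genuine care is required; everything else is routine linear algebra resting on symmetry of $D$ and Lemma \ref{sec-lemma-bicon}.
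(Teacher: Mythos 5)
Your proof is correct and rests on the same underlying fact as the paper's, namely the self-duality of the positive semidefinite cone under the trace pairing, but the mechanics differ enough to be worth noting. The paper diagonalizes $D_YF$ itself, writing $D = C^{-1}\tilde{D}C$ and reducing $\Tr(D\,U)$ to $\Tr(\tilde{D}W)$ with $W = CUC^{-1}$; for the ``if'' direction it then tests against a rather vaguely described ``sparse matrix $U$ such that column $j$ corresponds to $e_j$'' to extract the eigenvalues $\lambda_j$. Your version instead leaves $D$ alone and diagonalizes the test object: rank-one matrices $\xi\xi^T$ give $\Tr(D\,\xi\xi^T) = \xi^T D\,\xi$ directly for the ``if'' direction, and the spectral decomposition $U = \sum_k \lambda_k q_k q_k^T$ handles the ``only if'' direction by linearity. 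This is cleaner and avoids the paper's implicit reliance on the change of basis being orthogonal (needed for $W$ to remain symmetric positive semidefinite) and on the test matrices being expressible in the eigenbasis of $D$. Your flagged subtlety is also a genuine one that the paper glosses over: from $\Tr(D\,U)\le 0$ for all $U\ge 0$ one can only conclude that $D$ is negative \emph{semi}definite, whereas the paper's proof claims ``each $\lambda_j < 0$'' from a non-strict inequality, which does not follow. Either the lemma should be stated with semidefiniteness throughout, or the hypothesis should be strengthened to $\Tr(D\,U)<0$ for all nonzero $U\ge 0$, exactly as you propose.
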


\begin{proof}
\noindent Let us start observing that as $D_YF$ is a symmetric matrix, there exists a diagonal matrix $\tilde{D}$ and a change of basis matrix $C$ such that $D = C^{-1} \tilde{D} C$. Then, we have that

\begin{align}
\Tr\left(D_YF \, U\right) = \Tr\left(C^{-1} \, \tilde{D} \, C \, U\right) = \Tr\left(C^{-1} \, \tilde{D} \, C \, U \, C^{-1} \, C\right).
\end{align}

\noindent If we denote $W =  C \, U \, C^{-1}$, the previous equation can be rewritten as

\begin{align}
\Tr\left(D_YF \, U\right) =  \Tr\left(C^{-1} \, \tilde{D} \, W \, C \right) = \Tr\left(\tilde{D} \, W \right),
\end{align}

\noindent where $W$ is a positive definite matrix. Using the last equality we can prove our statement. If  $Tr\left(D_YF \, U\right) \leq 0$ for all $U \geq 0$, let us choose a sparse matrix $U$ such that column $j$ corresponds to the standard vector $e_j$. Then, $W = U$ and $\tilde{D} W = \lambda_j$. Using the fact that $Tr\left(D_YF \, U\right) \leq 0$, we deduce that each $\lambda_j < 0$. 

\noindent Let us now suppose that $D_YF$ is negative definite. Then,

\begin{align}
\Tr\left(\tilde{D} \, W \right) = \sumai \tilde{D}_{ii} W_{ii} < 0
\end{align}

\noindent as each $\tilde{D}_{ii}$ are negative and each $W_{ii}$ are positive.

\end{proof}

\noindent Both Lemmas \ref{sec-lemma-bicon} and \ref{lemma-bicon} can be resumed in the following line: If the differential matrix $D_YF$ is symmetric, for all matrix $U \geq 0$ the following equivalences are valid

\begin{align*}
D_YF \leq 0 \iff \Tr\left(D_YF \, U\right) \leq 0 \iff  D_YF \left(U\right) \leq 0
\end{align*}

\noindent Recalling \eqref{mvt}, the matrix $Y-X$ is definite positive so by discarding the dependencies, the inequality becomes

\begin{align}
F_Y - F_X = D_YF \left(Y-X\right).
\end{align}

\noindent Hence, the nonlinear operator $F$ is degenerate elliptic if the differential matrix $D_YF$ is symmetric definite negative. In the following section we will see that the condition of being symmetric definite negative is the generalization of the Leland condition defined for the unidimensional problem with constant transaction costs.

\subsubsection{Differential Matrix calculation}

\noindent In this section we perform the calculations of the differential matrix with respect to the second derivatives of the nonlinear term $F$. Let us recall Equation \eqref{new_F} such that

\begin{align}
F\left(t,x,p,s,B \right) =  -\frac{1}{2} \, \, tr\left(A \, B \right) - r s \cdot S + rp + G\left(S,B\right)
\end{align}

\noindent Then, by applying standard calculations and discarding function dependencies, we have that

\begin{align}
D_YF\left(t,x,p,s,B \right)  = -\frac{\partial}{\partial B}  \, tr\left( \frac{1}{2}  A \, B \right) + \frac{\partial}{\partial B}  G\left(S,B\right) \label{DF}
\end{align}

\noindent The first derivative follows recalling the linearity of the trace function and the symmetry of matrix $A$. Then,


\begin{align}
\frac{\partial}{\partial B}  \, tr\left( \frac{1}{2}  A \, B \right) =  \frac{1}{2} A \label{DF_1}
\end{align}

\noindent The second derivative involves applying the product rule on the transaction costs term. Then,

\begin{align}
\frac{\partial}{\partial B}  G\left(S,B\right) &=\frac{\partial}{\partial B}  \left[\sum_{i=1}^{N} \frac{S_i}{\sqrt{\Delta t}} \sqrt{\frac{2}{\pi}} \, 2 \, \sqrt{\sumaj \sumak B_{ij} \, A_{jk} \, B_{ki}} \int_{0}^{+\infty} C\left(\sqrt{2 \, \Delta t \, \sumaj \sumak B_{ij} \, A_{jk} \, B_{ki} } \, y \right) \,  y \, e^{-y^2}  \, dy \right] \nonumber \\
&= \sum_{i=1}^{N} \frac{S_i}{\sqrt{\Delta t}} \sqrt{\frac{2}{\pi}} \, 2 \left[ \frac{\partial}{\partial B} \sqrt{\sumaj \sumak B_{ij} \, A_{jk} \, B_{ki}} \int_{0}^{+\infty} C\left(\sqrt{2 \, \Delta t \, \sumaj \sumak B_{ij} \, A_{jk} \, B_{ki} } \, y \right) \,  y \, e^{-y^2}  \, dy  \right. \nonumber\\
&+ \left. \sqrt{\sumaj \sumak B_{ij} \, A_{jk} \, B_{ki}} \int_{0}^{+\infty} \frac{\partial}{\partial B}  C\left(\sqrt{2 \, \Delta t \, \sumaj \sumak B_{ij} \, A_{jk} \, B_{ki} } \, y \right) \,  y \, e^{-y^2}  \, dy \right] \label{big-der}
\end{align}

\noindent The above calculation can be solved by analysing two derivatives. The first one correspond to the $\Theta_i$ function defined in \eqref{theta1_sec2}. The calculation of the derivative of this term is done in \ref{appendix} and is given by 

%

\begin{align}
\frac{\partial}{\partial B} \sqrt{\Theta_i} = \frac{1}{2} \Theta_i^{-1/2} \left[ AB + BA \right]. \label{DF_2}
\end{align}

\noindent The second derivative corresponds to the derivative of the transaction costs function $C$ with respect to matrix $B$. Again, the complete calculation is presented in \ref{appendix}. Then, the derivative with respect to matrix $B$ is equal to

%
%
%

\begin{align}
 \frac{\partial}{\partial B}  C\left(\sqrt{2 \, \Delta t \, \left(BAB \right)_{ii} } \, y \right) = C'\left(H_i\left(y\right) \right) \, y \, \left[ AB + BA \right] \, \sqrt{\frac{\Delta t}{2}} \, \Theta_i^{-1/2} \label{DF_3}
\end{align}

\noindent Now, we can write Equation \eqref{DF} as

\begin{align}
D_YF\left(t,x,p,s,B \right) &=  -\frac{1}{2} A + 2 \sumai \frac{S_i}{\sqrt{\Delta t}} \sqrt{\frac{2}{\pi}} \left[ \frac{1}{2} \Theta_i^{-1/2} \left[ AB + BA \right] \int_{0}^{+\infty} C\left(\sqrt{2 \, \Delta t \,\left(BAB \right)_{ii} } \, y \right) \,  y \, e^{-y^2}  \, dy   \right. \nonumber \\
&+ \left. \left[BA + AB \right] \sqrt{\frac{\Delta t}{2}} \int_{0}^{+\infty} C'\left(\sqrt{2 \, \Delta t \, \left(BAB \right)_{ii} } \, y \right) \,  y^2 \, e^{-y^2}  \, dy \right] \nonumber \\
D_YF\left(t,x,p,s,B \right)&=  -\frac{1}{2} A + \left[BA + AB \right] \frac{2}{\sqrt{\Delta t}} \sqrt{\frac{2}{\pi}} \sumai S_i \left[ \frac{1}{2} \Theta_i^{-1/2}  \int_{0}^{+\infty} C\left(\sqrt{2 \, \Delta t \,\left(BAB \right)_{ii} } \, y \right) \,  y \, e^{-y^2}  \, dy  \right. \nonumber \\
&+ \left.  \sqrt{\frac{\Delta t}{2}}\int_{0}^{+\infty} C'\left(\sqrt{2 \, \Delta t \, \left(BAB \right)_{ii} } \, y \right) \,  y^2 \, e^{-y^2}  \, dy \right]. \label{DFB_calc}
\end{align}

\noindent Equation \eqref{DFB_calc} defines the final state of the differential matrix of the nonlinear parabolic operator $F$ with respect to the component of the second derivatives. The generalized Leland's condition found in Lemmas \ref{sec-lemma-bicon} and \ref{lemma-bicon} requires that the differential matrix $D_YF$ is definite negative. In fact, we can check that this condition reduces to the original Leland's condition when fixing $N = 1$ and the function of transaction costs $C$ as constant.

\begin{rem}

\noindent Let us show that effectively our condition reduces to Leland's condition in the unidimensional case with constant transaction costs. For this purpose, we assign $A$, $\Theta$ and $C$ as in the unidimensional case. Then,

\begin{align}
A = S^2 \sigma^2, \quad \Theta = \frac{\partial^2 V}{\partial S^2} \sigma^2 S^2, \quad C\left(\sqrt{2 \, \Delta t \,\left(D^2V A D^2V \right) } \, y \right) = \frac{\tilde{C}}{2}
\end{align}

\noindent If we apply this definitions on Equation \eqref{DFB_calc}, we get that

\begin{align}
D_YF\left(t,x,p,s,\frac{\partial^2 V}{\partial S^2} \right) &= -\frac{1}{2} S^2 \sigma^2 + 2\frac{\partial^2 V}{\partial S^2} S^2\sigma^2 \frac{2 \, S}{\sqrt{\Delta t}} \sqrt{\frac{2}{\pi}} \frac{1}{2} \frac{\tilde{C}}{4} \left(\frac{\partial^2 V}{\partial S^2} S^2\sigma^2\right)^{-1/2} \nonumber \\
&= -\frac{1}{2} S^2 \sigma^2 + S^2\sigma^2 \, \text{sgn}\left( \frac{\partial^2 V}{\partial S^2} \right) \frac{S}{\sqrt{\Delta t}}  \sqrt{\frac{2}{\pi}} \frac{\tilde{C}}{2\sigma S} \nonumber \\
&=  \frac{1}{2} S^2 \sigma^2 \left[ -1 + \frac{\tilde{C}}{\sqrt{\Delta t}} \sqrt{\frac{2}{\pi}} \frac{1}{\sigma} \, \text{sgn}\left( \frac{\partial^2 V}{\partial S^2} \right) \right]
\end{align}

Then, $D_YF$ is negative if and only if

\begin{align}
\frac{\tilde{C}}{\sqrt{\Delta t}} \sqrt{\frac{2}{\pi}} \frac{1}{\sigma} < 1
\end{align}

\end{rem}

\subsection{Perron method for existence of solution}

\noindent Let us start this section by setting the framework to apply the well-known Perron method to derive the existence of a viscosity solution. We will first apply a change of variables so that the nonlinear operator $F$ is defined with constant coefficients. Then, we apply the change of variables

\begin{align*}
x_i = \log\left(S_i\right)
\end{align*}

\noindent so that the nonlinear operator $F$ becomes

\begin{align}
F\left(\tau,x,V,DV,D^2V \right) = - \frac{1}{2} \sumai \sumaj \sigma_i \sigma_j \rho_{ij} \frac{\partial^2 V}{\partial x_i \partial x_j} - \sumai \frac{\partial V}{\partial x_i} \left(r - \frac{\sigma_i^2}{2}\right) + rV + G\left(x,D^2V\right),\label{Fchange}
\end{align}

\noindent and the nonlinear function $G$ becomes

\begin{align}
G\left(x,D^2V\right) = \sum_{i=1}^{N} \frac{e^{x_i}}{\sqrt{\Delta t}} \sqrt{\frac{2}{\pi}} \, 2 \, \sqrt{\Theta_i} \int_{0}^{+\infty} C\left(\sqrt{\Delta t \, 2 \, \Theta_i } y \right) \,  y \, e^{-y^2}  \, dy,\label{G_change}
\end{align}

\noindent with

\begin{align}
\Theta_i= e^{-2x_i} \left[ \sumajnoi \sumaknoi \frac{\partial^2 V}{\partial x_i \partial x_j} \frac{\partial^2 V}{\partial x_i \partial x_k} \sigma_j \sigma_k \rho_{jk} + 2 \, \sumajnoi \frac{\partial^2 V}{\partial x_i \partial x_j} \left(\frac{\partial^2 V}{\partial x_i^2}-\frac{\partial V}{\partial x_i} \right) \sigma_i \sigma_j + \left(\frac{\partial^2 V}{\partial x_i^2}-\frac{\partial V}{\partial x_i} \right)^2 \sigma_i^2 \right]. \label{theta_change}
\end{align}

\noindent Given Equations \eqref{Fchange} and \eqref{G_change}, our Dirichlet problem becomes

\begin{align}\label{DP}
\frac{\partial V}{\partial \tau} + F\left(\tau,x,V,DV,D^2V \right) = 0  \quad  &\hbox{in} \quad \Omega\times\left[0,T\right] \nonumber\\
V\left(0,x_1,...,x_N \right) = V_0\left(x_1,...,x_N\right) \quad &\hbox{in} \quad \Omega 
\end{align}

\noindent where $V_0\left(x_1,...,x_N\right)$ is the initial condition. Hence, the main theorem of this work is defined as follows

\begin{thm}\label{main-th}
Assume that the differential matrix with respect to the Hessian matrix of the nonlinear operator $F$ is negative definite. Then, the problem \eqref{DP} has at least one viscosity solution.
\end{thm}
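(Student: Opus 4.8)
The plan is to apply the Perron method in the form presented in \cite{imbert2013introduction}, for which three ingredients must be assembled: degenerate ellipticity of $F$, a comparison principle for problem \eqref{DP}, and a pair of ordered sub/supersolutions sharing the initial datum $V_0$. I would organise the argument around these three points and then conclude by the standard envelope construction.

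First, I would record that the hypothesis of the theorem, together with Lemmas \ref{sec-lemma-bicon} and \ref{lemma-bicon}, yields degenerate ellipticity in the sense of \eqref{deg-el} at once. Indeed, if $D_YF$ is negative definite then $\Tr\left(D_YF\,U\right)\le 0$ for every $U\ge 0$; choosing $U=Y-X\ge 0$ along a segment $B\in(X,Y)$ and invoking the mean value identity \eqref{mvt} gives $F\left(t,x,p,s,Y\right)\le F\left(t,x,p,s,X\right)$ whenever $X\le Y$. This is exactly the monotonicity in the Hessian slot required by the method, and the term $+rV$ (with $r>0$) supplies the strict monotonicity in the zeroth-order slot that makes the operator proper.

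Second, I would construct the barriers. After the logarithmic change of variables the second-order coefficients $\sigma_i\sigma_j\rho_{ij}$ are constant, and $G\ge 0$ is controlled above by $\overline{C}$ times the factor coming from $\sqrt{\Theta_i}$ and the weights $e^{x_i}$ (the Gaussian integrals in \eqref{G_change} being finite since $C\in L^2$ and bounded). I would therefore look for sub/supersolutions of the form $u^{\pm}(\tau,x)=V_0(x)\pm K\tau$, after possibly regularising $V_0$ and adding a function dominating the spatial growth, and choose $K$ large enough that $u^{+}$ is a supersolution and $u^{-}$ a subsolution, with $u^{-}\le u^{+}$ and $u^{\pm}(0,x)=V_0(x)$. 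With degenerate ellipticity, the comparison principle, and these ordered barriers in hand, Perron's method applies: the function
\[
u(\tau,x)=\sup\{\,w(\tau,x): w \text{ is a subsolution with } u^{-}\le w\le u^{+}\,\}
\]
has an upper semicontinuous envelope $u^{*}$ that is a subsolution and a lower semicontinuous envelope $u_{*}$ that is a supersolution; comparison forces $u^{*}\le u_{*}$, so $u^{*}=u_{*}=u$ is continuous and is the desired viscosity solution, attaining $V_0$ at $\tau=0$ by the squeeze $u^{-}\le u\le u^{+}$.

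The hard part, as usual, will be the comparison principle rather than the Perron bookkeeping. The nonlinear term $G$ depends on $D^2V$ through $\sqrt{\left(D^2V\,A\,D^2V\right)_{ii}}$, which is only H\"older-$1/2$ and non-Lipschitz where $\Theta_i$ degenerates; verifying the structural estimate of Crandall--Ishii--Lions, namely controlling $F\left(t,y,r,\alpha(x-y),Y\right)-F\left(t,x,r,\alpha(x-y),X\right)$ by a modulus of $\alpha|x-y|^2$ for the matrices $X,Y$ produced by doubling the variables, demands care precisely at those degenerate points. Moreover, the unbounded domain $\mathbb{R}^N$ together with the exponential weights $e^{x_i}$ in $G$ forces the whole comparison argument to be run within a suitable growth class, with a penalisation term tailored to that growth. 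These are the genuinely delicate points; once they are settled, the remaining steps are routine.
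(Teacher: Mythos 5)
Your overall skeleton (degenerate ellipticity via Lemmas \ref{sec-lemma-bicon} and \ref{lemma-bicon}, ordered barriers, Perron envelope, comparison principle to identify $U^*$ with $U_*$) is exactly the paper's, and your first step is the same argument the paper makes around \eqref{mvt}. The one point where you diverge is the barrier construction, and as written it has a genuine gap. You propose $u^{\pm}(\tau,x)=V_0(x)\pm K\tau$ and ask for $K$ large enough; but then the subsolution inequality requires
\[
\pm K + F\bigl(\tau,x,V_0,DV_0,D^2V_0\bigr)
\]
to have a sign, i.e.\ you need $\sup_x\bigl|F(\tau,x,V_0,DV_0,D^2V_0)\bigr|<\infty$. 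This fails for two reasons: the payoff $V_0$ is typically not $C^2$ (kinks at the strike), and even after regularisation the \emph{linear} part $-\frac{1}{2}\sum\sigma_i\sigma_j\rho_{ij}\partial^2_{ij}V_0-\sum(r-\sigma_i^2/2)\partial_iV_0+rV_0$ is in general unbounded on $\Omega=\mathbb{R}^N$, so no finite $K$ exists. Your parenthetical ``adding a function dominating the spatial growth'' is precisely the missing ingredient, and it is not routine. The paper's device is to centre the barriers not at $V_0$ but at $\Lambda$, the solution of the linear Black--Scholes problem \eqref{BS}: since $\partial_\tau\Lambda+\tilde F(\Lambda)=0$ identically, the entire linear part cancels and the barrier inequality for $\overline{V}=\Lambda+C\tau$, $\underline{V}=\Lambda-C\tau$ reduces to $\pm C+G(x,D^2\Lambda)\gtrless 0$, which holds under \eqref{perron-cond} because the second derivatives of the Black--Scholes solution decay at extreme prices and hence $G(x,D^2\Lambda)$ is bounded (Lemma \ref{lemma-sub-super} and the remark following it). Without this cancellation your choice of barriers does not close.

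On the comparison principle: you are right that it is the delicate analytical point, given that $G$ depends on $D^2V$ only through $\sqrt{(D^2V\,A\,D^2V)_{ii}}$ (H\"older-$1/2$ at degeneracies) and that the domain is unbounded with exponential weights. Be aware, however, that the paper does not prove it either; it is quoted from \cite{imbert2013introduction} and applied as a black box. So your honest flagging of that difficulty is not a defect relative to the paper, but neither of you discharges it.
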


\noindent Before passing to the proof of the theorem, we are going to state some important definitions that will be used afterwards. Given an open set $\Omega_T \subset \mathbb{R}^{N+1}$, we recall that $V$ is \textit{lower semi-continuous} (LSC) or \textit{upper semi-continuous} (USC) at $\left(t,x\right)$ if for all sequences $\left(s_n, y_n \right) \rightarrow \left(t, x \right)$,

\begin{alignat}{2}
V\left(t,x\right) &\leq \liminf \limits_{n \rightarrow \infty}  V\left(s_n, y_n\right) \quad &&\text{(LSC)} \nonumber \\
V\left(t,x\right) &\geq \limsup \limits_{n \rightarrow \infty}  V\left(s_n, y_n\right) \quad &&\text{(USC)} \nonumber.
\end{alignat}

\noindent Moreover, we define $V_{*}$ the \textit{lower semi-continuous envelope of V} as the largest lower semi-continuous function lying below $V$ and $V^{*}$ the correspondent  \textit{upper semi-continuous envelope of V} as the smallest upper semi-continuous function lying above $V$.\\

\noindent Let us continue by presenting the definition of viscosity solutions, which are the type of solutions that we will look for. Let us recall  $\Omega_T = \left[0,T \right] \times \mathbb{R}^N$ and a function $V \in C^{1,2}\left(\Omega_T\right)$. Then, we have the following definitions.

\begin{defn}
\noindent $U$ is a subsolution of \eqref{DP} if $U$ is upper semi-continuous and if, for all $\left(t,x\right) \in \Omega_T$ and all the test functions $\phi$ such that $U \leq \phi$ in a neighbourhood of $\left(t,x\right)$ and $U\left(t,x\right)=\phi\left(t,x\right)$ , we have that

\begin{align}
\frac{\partial \phi}{\partial \tau} + F\left(\tau,x,\phi,D\phi,D^2\phi \right) \leq 0. 
\end{align}
 
\noindent $U$ is a supersolution of \eqref{DP} if $U$ is lower semi-continuous and if, for all $\left(t,x\right) \in \Omega_T$ and all the test functions $\phi$ such that $U \geq \phi$ in a neighbourhood of $\left(t,x\right)$ and $U\left(t,x\right)=\phi\left(t,x\right)$, we have that

\begin{align}
\frac{\partial \phi}{\partial \tau} + F\left(\tau,x,\phi,D\phi,D^2\phi \right) \geq 0. 
\end{align} 

\noindent Finally, $U$ is a solution of \eqref{DP} if it is both a sub and supersolution. 
 
\end{defn}

\noindent Now we can present Perron method to find a solution of problem \eqref{DP}. First of all, we require that the nonlinear operator $F$ is degenerate elliptic. Then, Perron method is defined as follows.

\begin{thm}\label{perron}
Assume $w$ is a subsolution of problem \eqref{DP} and $v$ is a supersolution of problem \eqref{DP} such that $w \leq v$. Suppose also that there is a subsolution $\underline{u}$ and a supersolution $\overline{u}$ of problem \eqref{DP} that satisfy the boundary condition $\underline{u}_{*}\left(t,x\right) = \overline{u}^{*}\left(t,x\right) = g\left(t,x\right)$. Then,

\begin{align}
W\left(t,x\right) = \sup \lbrace w\left(t,x\right): \underline{u} \leq w \leq \overline{u} \, \text{and} \, w \, \text{is a subsolution of \eqref{DP}} \rbrace.
\end{align}

\end{thm}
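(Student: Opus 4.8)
\noindent The plan is to show that the function $W$ defined above is a viscosity solution of \eqref{DP}; the entire argument rests on the degenerate ellipticity of $F$ (guaranteed by the generalized Leland condition established in the previous subsection) together with the comparison principle that we are permitted to assume. The strategy is the classical Perron one: first prove that the upper semi-continuous envelope $W^{*}$ is a subsolution and that the lower semi-continuous envelope $W_{*}$ is a supersolution, then invoke the comparison principle to sandwich the two envelopes and conclude that $W$ is continuous and solves \eqref{DP}. Throughout, note that $W$ is locally bounded, since by construction $\underline{u} \leq W \leq \overline{u}$.

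\noindent First I would establish that $W^{*}$ is a subsolution using the stability of viscosity subsolutions under suprema. Concretely, let $\phi$ be a test function touching $W^{*}$ from above at a point $\left(t_0,x_0\right)$. By the definition of the upper envelope and of the supremum there exist admissible subsolutions $w_n$ and points $\left(t_n,x_n\right) \to \left(t_0,x_0\right)$ with $w_n\left(t_n,x_n\right) \to W^{*}\left(t_0,x_0\right)$; after a small perturbation of $\phi$ one locates nearby points at which $\phi$ touches each $w_n$ from above, so the subsolution inequality holds there. Passing to the limit, using the continuity of $F$ and the degenerate ellipticity to control the second-order term, yields $\partial_{\tau}\phi + F\left(t_0,x_0,\phi,D\phi,D^2\phi\right) \leq 0$, which is precisely the subsolution property for $W^{*}$.

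\noindent The main obstacle is the second claim, that $W_{*}$ is a supersolution, which I would prove by contradiction through a bump construction. Suppose $W_{*}$ fails the supersolution inequality at some $\left(t_0,x_0\right)$: there is a test function $\phi$ touching $W_{*}$ from below with $\partial_{\tau}\phi + F\left(t_0,x_0,\phi,D\phi,D^2\phi\right) < 0$ strictly. By continuity of $F$ this strict inequality persists on a small parabolic neighbourhood, so a slightly lifted function $\phi + \delta$ is still a subsolution there for $\delta>0$ small. One then glues this lifted bump with $W$ by taking the pointwise maximum on that neighbourhood and retaining $W$ outside; the delicate points are verifying that the glued function is again an admissible subsolution (the maximum of two subsolutions is a subsolution, and the matching along the boundary of the neighbourhood is arranged so that no spurious test point arises there) and that it still lies below $\overline{u}$, which uses the strict gap $W_{*}\left(t_0,x_0\right) < \overline{u}\left(t_0,x_0\right)$ available at the point of failure. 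Since the glued function strictly exceeds $W$ at some point yet belongs to the admissible class, this contradicts the definition of $W$ as the supremum, and hence $W_{*}$ must be a supersolution.

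\noindent Finally I would combine the two facts. The envelopes satisfy $W_{*} \leq W \leq W^{*}$ everywhere, while on the boundary the hypotheses $\underline{u}_{*} = \overline{u}^{*} = g$ together with $\underline{u} \leq W \leq \overline{u}$ force $W^{*} \leq g \leq W_{*}$ there. Applying the comparison principle to the subsolution $W^{*}$ and the supersolution $W_{*}$ gives $W^{*} \leq W_{*}$ on all of $\Omega_T$. Combined with the reverse inequality this yields $W_{*} = W = W^{*}$, so $W$ is continuous and is simultaneously a sub- and a supersolution, that is, a viscosity solution of \eqref{DP}, as claimed.
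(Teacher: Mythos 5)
The paper does not actually prove Theorem \ref{perron}: it is recalled as a known result (the Perron--Ishii method, taken from \cite{imbert2013introduction}), and the paper's own contribution lies downstream, in constructing the specific sub- and supersolutions $\underline{V}$, $\overline{V}$ of Lemma \ref{lemma-sub-super} and in verifying the boundary matching needed to apply the comparison principle. So there is no proof in the paper to compare yours against; what you have written is a sketch of the classical argument that the paper imports wholesale. As a sketch it is essentially correct and follows the standard route: stability of subsolutions under suprema gives that $W^{*}$ is a subsolution, the bump construction by contradiction gives that $W_{*}$ is a supersolution, and the comparison principle collapses the two envelopes. One point you assert without justification is the ``strict gap'' $W_{*}\left(t_0,x_0\right) < \overline{u}\left(t_0,x_0\right)$ at the point where the supersolution property allegedly fails; the standard way to obtain it is to observe that if instead $W_{*}\left(t_0,x_0\right) = \overline{u}_{*}\left(t_0,x_0\right)$, then the test function $\phi$ touching $W_{*}$ from below also touches $\overline{u}_{*}$ from below there, and the supersolution property of $\overline{u}$ forces $\partial_{\tau}\phi + F \geq 0$, contradicting the assumed strict negativity. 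With that remark supplied, your outline is a faithful reproduction of the proof in the cited reference, and it also correctly repairs the theorem's statement, which as printed only defines $W$ and omits the intended conclusion that $W$ is a viscosity solution of \eqref{DP}.
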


\noindent In order to apply the Perron method we first have to  set a subsolution and supersolution of problem \eqref{DP}. Then, we have to construct a maximal subsolution such that it lies between both sub and supersolutions. Finally, we have to define the proper comparison principle such that the boundary condition defined in Theorem \ref{perron} holds. 

\noindent Hence, let us start by recalling the equivalent "Black-Scholes" linear problem. If we denote the linear elliptic operator as

\begin{align}
\tilde{F}\left(\tau,x,V,DV,D^2V \right) = - \frac{1}{2} \sumai \sumaj \sigma_i \sigma_j \rho_{ij} \frac{\partial^2 V}{\partial x_i \partial x_j} - \sumai \frac{\partial V}{\partial x_i} \left(r-\frac{\sigma_i^2}{2}\right) + rV ,\label{Flin}
\end{align} 

\noindent then there exists a unique solution $\Lambda$ of the problem

\begin{align}
\frac{\partial V}{\partial \tau} + \tilde{F}\left(\tau,x,V,DV,D^2V \right) = 0  \quad  &\hbox{in} \quad \Omega\times\left[0,T\right] \nonumber\\
V\left(0,x_1,...,x_N \right)= V_0\left(x_1,...,x_N\right) \quad  &\hbox{in} \quad \Omega \label{BS}
\end{align}

\noindent Based on the existence of this unique solution $\Lambda$, we will construct our sub and supersolutions. Then, the following Lemma presents both sub and supersolutions of problem \eqref{DP}.

\begin{lem}\label{lemma-sub-super} 
Let $F$ be the nonlinear elliptic operator defined in Equation \eqref{Fchange}. Then the following functions are sub and supersolutions of problem \eqref{DP}.

\begin{align}
\overline{V} &= \Lambda + C\tau \nonumber\\
\underline{V} &= \Lambda - C\tau\nonumber
\end{align}

\noindent where $\Lambda$ is the unique solution of problem \eqref{BS} and $C$ is a positive constant such

\begin{align}\label{perron-cond}
C \geq \sup_{x \in \Omega} \, \lvert G\left(x,D^2 \Lambda \right) \rvert
\end{align}

\end{lem}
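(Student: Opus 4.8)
The plan is to verify directly that $\overline{V}$ and $\underline{V}$ satisfy the \emph{classical} (pointwise) super- and subsolution inequalities, and then to upgrade these to the viscosity sense. Since $\Lambda$ is the classical solution of the linear problem \eqref{BS}, it is $C^{1,2}$ on $\Omega_T$, so $\overline{V}=\Lambda+C\tau$ and $\underline{V}=\Lambda-C\tau$ are smooth as well. For a smooth function, any admissible test function $\phi$ touching it from below (resp.\ above) at an interior point satisfies $\partial_\tau\phi=\partial_\tau\overline{V}$, $D\phi=D\overline{V}$ and $D^2\phi\le D^2\overline{V}$ (resp.\ $D^2\phi\ge D^2\underline{V}$); invoking the degenerate ellipticity of $F$ (Definition \ref{def-deg-el}, guaranteed by the standing hypothesis of Theorem \ref{main-th}) then shows that the viscosity inequalities follow from the pointwise ones. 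Hence it suffices to evaluate $\partial_\tau(\cdot)+F(\tau,x,\cdot,D\cdot,D^2\cdot)$ on the two candidates.

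The computation is where the choice of $C$ enters. Writing $F=\tilde{F}+G$ with $\tilde{F}$ the linear operator \eqref{Flin}, I would observe that adding the purely temporal term $\pm C\tau$ alters neither the spatial gradient nor the Hessian, so that $D^2\overline{V}=D^2\underline{V}=D^2\Lambda$ and hence $G(x,D^2\overline{V})=G(x,D^2\underline{V})=G(x,D^2\Lambda)$; moreover, by linearity of $\tilde{F}$, only its zeroth-order term $rV$ and the time derivative are affected. Using that $\Lambda$ solves $\partial_\tau\Lambda+\tilde{F}(\tau,x,\Lambda,D\Lambda,D^2\Lambda)=0$, the bulk of the terms cancel and one is left with
\begin{align*}
\partial_\tau\overline{V}+F(\tau,x,\overline{V},D\overline{V},D^2\overline{V}) &= C+rC\tau+G(x,D^2\Lambda),\\
\partial_\tau\underline{V}+F(\tau,x,\underline{V},D\underline{V},D^2\underline{V}) &= -C-rC\tau+G(x,D^2\Lambda).
\end{align*}

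To close the argument I would use $r\ge 0$, $\tau\ge 0$ and the defining property \eqref{perron-cond} of $C$. For the supersolution, $C+rC\tau+G(x,D^2\Lambda)\ge C-\sup_{x\in\Omega}|G(x,D^2\Lambda)|\ge 0$; for the subsolution, $-C-rC\tau+G(x,D^2\Lambda)\le -C+\sup_{x\in\Omega}|G(x,D^2\Lambda)|\le 0$, which are exactly the required classical inequalities. The initial data are matched exactly, since $\overline{V}(0,x)=\underline{V}(0,x)=\Lambda(0,x)=V_0(x)$, so compatibility with the initial condition of \eqref{DP} holds, with equality, and in fact $\underline{V}\le\Lambda\le\overline{V}$ throughout $\Omega_T$.

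The main obstacle I anticipate is not the algebraic verification, which is mechanical, but the well-definedness of the constant $C$: one must ensure that $\sup_{x\in\Omega}|G(x,D^2\Lambda)|$ is genuinely finite. Because $G$ carries the growing prefactors $e^{x_i}/\sqrt{\Delta t}$ (see \eqref{G_change}), finiteness requires the Hessian of the Black--Scholes solution $\Lambda$, after the logarithmic change of variables, to decay fast enough to compensate this growth; this rests on regularity and decay estimates for $\Lambda$ together with the boundedness of the transaction-cost function assumed at the start of Section \ref{demostracion}. A secondary point to state carefully is the reduction from viscosity to classical sub/supersolutions, which is precisely where the degenerate ellipticity established in Lemmas \ref{sec-lemma-bicon} and \ref{lemma-bicon} is used.
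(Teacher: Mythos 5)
Your proposal is correct and follows essentially the same route as the paper: reduce the viscosity inequalities to classical ones via the degenerate ellipticity of $F$ (using $D^2\phi\ge D^2\underline{V}$, resp.\ $D^2\phi\le D^2\overline{V}$, at the touching point), then exploit that $\pm C\tau$ leaves $D\Lambda$ and $D^2\Lambda$ unchanged so that the linear equation \eqref{BS} cancels everything except $\mp C \mp rC\tau + G(x,D^2\Lambda)$, which has the right sign by \eqref{perron-cond}. Your closing concern about the finiteness of $\sup_{x\in\Omega}|G(x,D^2\Lambda)|$ is well placed and is exactly the point the paper addresses in the remark following the lemma, via the decay of the second derivatives of the Black--Scholes solution.
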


\begin{proof}
\noindent Let us see that the $\underline{V}$ is a subsolution of \eqref{DP}. Firstly, the upper semi-continuity of $\underline{V}$ follows from the continuity of the solution $\Lambda\left(\tau,x\right)$. Let us see that for all test functions $\phi$ such that $\underline{V} \leq \phi$ in a neighbourhood of $\left(\tau,x\right)$ and $\underline{V}\left(\tau,x\right)=\phi\left(\tau,x\right)$, it follows that $ \frac{\partial \phi}{\partial \tau} + F\left(\tau,x,\phi,D\phi,D^2\phi \right)$ is negative.

\noindent Let $\phi$ be a test function such that $\underline{V} \leq \phi$. Then, we have

\begin{align*}
\frac{\partial \phi}{\partial \tau}\left(\tau,x\right) &= \frac{\partial \underline{V}}{\partial \tau}\left(\tau,x\right) \\
D\phi\left(\tau,x\right) &= D\underline{V}\left(\tau,x\right) \\
D^2\phi\left(\tau,x\right) &\geq D^2\underline{V}\left(\tau,x\right)
\end{align*}

\noindent Now we use the condition of degenerate ellipticity of the operator $F$. This condition implies that

\begin{align*}
\frac{\partial \phi}{\partial \tau} + F\left(\tau,x,\phi,D\phi,D^2\phi \right) &\leq  \frac{\partial \underline{V}}{\partial \tau} + F\left(\tau,x,\underline{V},D\underline{V},D^2\underline{V} \right) \\
&\leq  G\left(\tau,D^2\Lambda\right) - C\\
&\leq 0 
\end{align*}

\noindent where the last inequality holds using Condition \ref{perron-cond}.

\noindent Let us now prove that $\overline{V}$ is in fact a supersolution. In this case, the lower semi-continuity follows from the continuity of the solution $\Lambda$. Let us see that for all test functions $\phi$ such that $\overline{V} \geq \phi$ in a neighbourhood of $\left(\tau,x\right)$ and $\overline{V}\left(\tau,x\right)=\phi\left(\tau,x\right)$, it follows that $ \frac{\partial \phi}{\partial \tau} + F\left(\tau,x,\phi,D\phi,D^2\phi \right)$ is positive.

\noindent Let $\phi$ be a test function such that $\underline{\Lambda} \geq \phi$. Then, we have that

\begin{align*}
\frac{\partial \phi}{\partial \tau}\left(\tau,x\right) &= \frac{\partial \overline{V}}{\partial \tau}\left(\tau,x\right) \\
D\phi\left(\tau,x\right) &= D\overline{V}\left(\tau,x\right) \\
D^2\phi\left(\tau,x\right) &\leq D^2\overline{V}\left(\tau,x\right)
\end{align*}

\noindent Now we use the condition of degenerate ellipticity of the operator $F$ and Condition \ref{perron-cond}. Both conditions imply that

\begin{align*}
\frac{\partial \phi}{\partial \tau} + F\left(\tau,x,\phi,D\phi,D^2\phi \right) &\geq  \frac{\partial \overline{V}}{\partial \tau} + F\left(\tau,x,\overline{V},D\overline{V},D^2\overline{V} \right) \\
&\geq G\left(x,D^2\Lambda\right) + C \\
&\geq 0.
\end{align*}

\noindent Then, $\overline{V}$ is a supersolution of problem \eqref{DP}.
\end{proof}

\begin{rem}
By definition, it remains valid that $\underline{V} \leq \overline{V}$.
\end{rem}

\begin{rem}
Given the "Black-Scholes" solution $\Lambda$, the nonlinear term $G\left(x,D^2\Lambda\right)$ is bounded for every $x$ in $\Omega$. Based on the construction of the replicant portfolio, it is observed that the transaction costs are proportional to the second derivatives of the option. Moreover, from the solution of the linear problem, we know that the second derivatives tend to zero when the prices are too small or too large. Then, in those scenarios, the replicant potfolio is almost not rebalanced so that a little amount of stocks are traded resulting on a small contribution of the transaction costs function.
\end{rem}

\noindent Following Lemma 2.3.15 from \cite{imbert2013introduction}, there exists a function $U$ such that $\underline{V} \leq U \leq \overline{V}$ and $U^*$ is a subsolution of \eqref{DP} and $U_*$ is a supersolution of \eqref{DP}. Then, to finally prove Theorem \ref{main-th}, we need to confirm that $U^*\left(\tau,S\right) = U_*\left(\tau,S\right)$. For this purpose, we will consider the comparison principle stated in \cite{imbert2013introduction}.

\begin{prop}[Comparison Principle]
If $u$ is a subsolution of problem \eqref{DP} and $v$ is a supersolution of problem \eqref{DP} in $\Omega_T$ and $u \leq v$ on the parabolic boundary $\partial_p \Omega_T$, then $u \leq v$ in $\Omega_T$. 
\end{prop}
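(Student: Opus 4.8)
The natural route is the standard viscosity comparison argument: assume the conclusion fails and derive a contradiction by \emph{doubling the variables} and applying the parabolic Crandall--Ishii theorem on sums. Suppose then that $M := \sup_{\Omega_T}(u-v) > 0$. Two structural properties of $F$ are what make the scheme succeed, and both are already available. First, $F$ is degenerate elliptic: this is exactly the generalized Leland condition established above (the differential matrix $D_YF$ is negative definite, equivalently $X\le Y\implies F(\cdot,X)\ge F(\cdot,Y)$). Second, $F$ is \emph{proper}, i.e. nondecreasing in the zeroth-order slot, since $\partial F/\partial V = r>0$; it is this strict monotonicity that will eventually absorb the error terms and force $M\le 0$.

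First I would double the variables. Because $\Omega=\mathbb{R}^N$ is unbounded, I add a localizing term and consider, for parameters $\alpha>0$ and small $\eta>0$,
\begin{align*}
\Phi(t,x,s,y) = u(t,x) - v(s,y) - \tfrac{\alpha}{2}|x-y|^2 - \tfrac{\alpha}{2}|t-s|^2 - \eta\bigl(|x|^2+|y|^2\bigr).
\end{align*}
The localization guarantees that $\Phi$ attains an interior maximum at some $(t_\alpha,x_\alpha,s_\alpha,y_\alpha)$. Routine penalization estimates then give $\tfrac{\alpha}{2}\bigl(|x_\alpha-y_\alpha|^2+|t_\alpha-s_\alpha|^2\bigr)\to 0$ as $\alpha\to\infty$, and, after letting $\alpha\to\infty$ and then $\eta\to 0$, the maximizing points converge to a common point at which $u-v$ approaches $M$; the hypothesis $u\le v$ on $\partial_p\Omega_T$ ensures this limit point lies in the interior (in particular at a positive time), not on the parabolic boundary.

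At the maximizer I invoke the theorem on sums to obtain reals $a,b$ with $a=b$ (the time penalty produces equal time-derivative components) and symmetric matrices $X,Y\in\mathbb{S}_N$ such that $(a,p,X)$ belongs to the parabolic superjet of $u$ at $(t_\alpha,x_\alpha)$ and $(b,p,Y)$ to the parabolic subjet of $v$ at $(s_\alpha,y_\alpha)$, where $p=\alpha(x_\alpha-y_\alpha)$ is the common gradient and
\begin{align*}
\begin{pmatrix} X & 0 \\ 0 & -Y \end{pmatrix} \le 3\alpha \begin{pmatrix} I & -I \\ -I & I \end{pmatrix},
\end{align*}
whence $X\le Y$. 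Writing the subsolution inequality $a+F(t_\alpha,x_\alpha,u,p,X)\le 0$, the supersolution inequality $b+F(s_\alpha,y_\alpha,v,p,Y)\ge 0$, subtracting, using $a=b$, and decomposing $F=F_0+rV$ (with $F_0$ the $V$-independent part), the time terms cancel and one obtains
\begin{align*}
r\bigl(u(t_\alpha,x_\alpha)-v(s_\alpha,y_\alpha)\bigr) \le \bigl[F_0(y_\alpha,p,Y)-F_0(x_\alpha,p,Y)\bigr] + \bigl[F_0(x_\alpha,p,Y)-F_0(x_\alpha,p,X)\bigr].
\end{align*}
The second bracket is $\le 0$ by degenerate ellipticity (same spatial point, $X\le Y$), so it remains to show the first bracket vanishes in the limit.

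The first bracket is exactly where the difficulty concentrates, and it is the step I expect to be the main obstacle. The logarithmic change $x_i=\log S_i$ is decisive here: it makes the second-order coefficients $\sigma_i\sigma_j\rho_{ij}$ and the drift coefficients $r-\sigma_i^2/2$ constant, so $F_0$ depends on the spatial variable \emph{only} through $G$, and the first bracket reduces to $G(y_\alpha,p,Y)-G(x_\alpha,p,Y)$. The trouble is that $G$ is genuinely nonlinear in the Hessian, entering through the quadratic forms $\Theta_i$ of \eqref{new_theta}, while the matrices $X,Y$ furnished by the theorem on sums grow like $\alpha$; a crude Lipschitz bound in $x$ would carry a factor $\|Y\|\sim\alpha$ against $|x_\alpha-y_\alpha|$, which need not be small. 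The correct estimate must instead feed the full matrix inequality above into the quadratic structure $\Theta_i=(Y A Y)_{ii}$, and combine it with the boundedness of the cost function ($\underline C<C<\overline C$, so each $y$-integral is controlled by $\overline C/2$) and the smoothness of the exponential weights $e^{x_i}$, to produce a modulus of continuity $\omega\bigl(\alpha|x_\alpha-y_\alpha|^2+|x_\alpha-y_\alpha|\bigr)$ that tends to $0$. This is precisely the structure condition underlying the comparison principle of \cite{imbert2013introduction}, and once it is verified for our $G$ the first bracket vanishes, giving $rM\le 0$ and contradicting $M>0$; hence $u\le v$ throughout $\Omega_T$.
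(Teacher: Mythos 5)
The paper does not actually prove this proposition: it is imported verbatim from \cite{imbert2013introduction} as a known comparison result, so there is no in-paper argument to match yours against. Your sketch follows the canonical route (doubling of variables, localization, the parabolic theorem on sums, properness from the $rV$ term, degenerate ellipticity from the generalized Leland condition), and that skeleton is the right one. But as a proof it has a genuine gap, and you name it yourself: the entire difficulty of a comparison principle for \emph{this} operator is the verification that
$F_0(y_\alpha,p,Y)-F_0(x_\alpha,p,X)$ is controlled by a modulus $\omega\bigl(\alpha|x_\alpha-y_\alpha|^2+|x_\alpha-y_\alpha|\bigr)$, i.e.\ the Crandall--Ishii--Lions structure condition for $G$. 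You assert that this ``must'' follow from feeding the matrix inequality into the quadratic forms $\Theta_i$, but you never carry it out, and there are concrete reasons it is not routine here: the map $Y\mapsto\sqrt{(YAY)_{ii}}$ is positively $1$-homogeneous, so $G$ itself grows like $\|Y\|\sim\alpha$ and the crude bound you rightly reject is the only one that comes for free; the cost function $C$ is assumed only measurable and bounded, so $G$ need not even be continuous in the Hessian entry, let alone admit the modulus you need; after the logarithmic change the prefactor $e^{x_i}\sqrt{\Theta_i}$ is $x$-free, but the argument of $C$ retains an $e^{-x_i}$ dependence that must still be estimated; and $\Theta_i$ in \eqref{theta_change} also involves the gradient, so the ``first bracket'' is not purely a spatial perturbation at fixed $(p,Y)$. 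Deferring all of this to ``the structure condition of the reference'' is circular, since verifying that condition for $G$ \emph{is} the theorem.

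A secondary gap: $\Omega=\mathbb{R}^N$ is unbounded, and the proposition as stated carries no growth hypotheses on $u$ and $v$ at spatial infinity. Your $\eta(|x|^2+|y|^2)$ penalization only forces the maximum of $\Phi$ to be attained, and only lets you recover $M$ in the limit $\eta\to0$, if $u-v$ grows sub-quadratically; without some such assumption comparison on $\mathbb{R}^N$ can fail outright. In the paper's application this is harmless because the sub- and supersolutions are $\Lambda\pm C\tau$ with $\Lambda$ the Black--Scholes solution, but a proof of the proposition as stated must either add the growth hypothesis or explain why it is implicit in the definition of sub/supersolution being used.
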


\noindent Hence, our last Lemma is stated below:

\begin{lem}
Let $\underline{V}$ and $\overline{V}$ be the sub and supersolutions of problem \eqref{DP} and $U$ the function obtained by Lemma 2.3.15 from \cite{imbert2013introduction} such that $\underline{V} \leq U \leq \overline{V}$. Then, $U^*\left(\tau,S\right) = U_*\left(\tau,S\right)$. 

\end{lem}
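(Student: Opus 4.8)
The plan is to exploit the fact that $U$ is squeezed between the two continuous barriers $\underline{V}$ and $\overline{V}$ constructed in Lemma \ref{lemma-sub-super} in order to pin down the values of both envelopes on the parabolic boundary, and then to invoke the Comparison Principle to force the two envelopes to coincide in the interior. I would first record the inequality $U_* \leq U^*$, which holds at every point of $\Omega_T$ simply because $U_* \leq U \leq U^*$ by the very definition of the lower and upper semicontinuous envelopes. The whole argument therefore reduces to establishing the reverse inequality $U^* \leq U_*$.

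Next I would verify that the two envelopes agree on the parabolic boundary $\partial_p \Omega_T = \{0\} \times \Omega$. Recall from Lemma \ref{lemma-sub-super} that $\underline{V} = \Lambda - C\tau$ and $\overline{V} = \Lambda + C\tau$, where $\Lambda$ is the continuous solution of the linear problem \eqref{BS}. Both barriers are continuous and satisfy $\underline{V}(0,x) = \overline{V}(0,x) = \Lambda(0,x) = V_0(x)$. Since $\underline{V} \leq U \leq \overline{V}$ on $\Omega_T$, taking limits inferior and superior as $(s,y) \to (0,x)$ and using continuity of the barriers yields
\[ V_0(x) = \liminf_{(s,y)\to(0,x)} \underline{V}(s,y) \leq U_*(0,x) \leq U^*(0,x) \leq \limsup_{(s,y)\to(0,x)} \overline{V}(s,y) = V_0(x), \]
so that $U^*(0,x) = U_*(0,x) = V_0(x)$ for every $x \in \Omega$; in particular $U^* \leq U_*$ on $\partial_p \Omega_T$.

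Finally I would apply the Comparison Principle. By Lemma 2.3.15 of \cite{imbert2013introduction} the function $U^*$ is a subsolution and $U_*$ is a supersolution of \eqref{DP}; degenerate ellipticity of $F$, which holds under the hypothesis of Theorem \ref{main-th}, is precisely what licenses that lemma and the Comparison Principle. Since $U^* \leq U_*$ on the parabolic boundary, the Comparison Principle gives $U^* \leq U_*$ throughout $\Omega_T$. Combined with the trivial inequality $U_* \leq U^*$ this forces $U^* = U_*$, whence $U = U^* = U_*$ is a continuous viscosity solution of \eqref{DP}, completing the proof of Theorem \ref{main-th}.

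The step I expect to be most delicate is the application of the Comparison Principle on the unbounded domain $\Omega = \mathbb{R}^N$. The principle as stated compares functions that agree on the parabolic boundary, but since $\Omega$ carries no lateral spatial boundary one must control the behaviour of $U^*$ and $U_*$ at spatial infinity, typically through a growth restriction inherited from the barriers $\underline{V}, \overline{V}$ and from the linear solution $\Lambda$. Checking that the envelopes enjoy the admissible growth and that the initial slice genuinely exhausts $\partial_p \Omega_T$ is the crux of the matter; the remaining squeezing and envelope manipulations are routine.
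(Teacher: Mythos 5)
Your proposal is correct and follows essentially the same route as the paper: the trivial inequality $U_* \leq U^*$, the squeeze between the continuous barriers $\underline{V}$ and $\overline{V}$ (which both reduce to $\Lambda$ at $\tau=0$) to get $U^* \leq U_*$ on the parabolic boundary, and then the Comparison Principle to propagate this to all of $\Omega_T$. Your closing caveat about justifying the Comparison Principle on the unbounded domain $\Omega=\mathbb{R}^N$ is a legitimate concern that the paper's own proof does not address either.
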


\begin{proof}
Let us first observe that the inequality $U_{*} \leq U^{*}$ holds by definition of the semi-continuous envelopes. For the other inequality let us recall $\underline{V}$ and $\overline{V}$ defined in Lemma \ref{lemma-sub-super} and, using the continuity of the linear solution $\Lambda$, we have that

\begin{align*}
\left(\underline{V}\right)_{*} = \underline{V} = \left(\underline{V}\right)^{*} \\
\left(\overline{V}\right)_{*} = \overline{V} = \left(\overline{V}\right)^{*}
\end{align*}

\noindent In particular, in the parabolic boundary, we find that both sub and supersolutions are equal to $\Lambda$. Then, it is valid that

\begin{align}
\left(\overline{V}\right)^* \leq \left(\underline{V}\right)_* \, \, \text{in} \, \, \partial_p\Omega_T 
\end{align}

\noindent Moreover, for Lemma 2.3.15 from \cite{imbert2013introduction}, $\underline{V} \leq U \leq \overline{V}$. Using this result and the previous inequality, it follows that

\begin{align}
U^* \leq U_* \, \, \text{in} \, \, \partial_p\Omega_T 
\end{align}

\noindent Finally, the expected inequality is obtained by the comparison's principle result.
\end{proof}

\section{Numerical Implementation}
\subsection{Numerical Framework}

\noindent In this section we derive a numerical framework that is used to find an approximate solution of problem \eqref{problema_completo}. This solution will help us to understand how the presence of transaction costs affects the pricing of a specific financial option. With this aim, we develop an iterative scheme such that on every step, an approximate solution is found. Each step is then repeated until the convergence of the scheme. By recalling the nonlinear problem, we propose the following iterative process 

\begin{alignat}{2}
-U_{\tau}^n+\mathcal{L}U^n &= G\left(U^{n-1}\right) \quad &&\hbox{in} \quad \Omega\times\left[0,T\right] \nonumber\\
U^n\left(0,x_1,...,x_n\right) &= U_0\left(x_1,...,x_n\right) \quad &&\hbox{in} \quad \Omega \label{lin_prob_3}
\end{alignat}

\noindent with $U^0\left(\tau,x_1,...,x_n \right) = 0$, $\hbox{dim} \, \Omega=2$ and $U\left(\tau,x\right) = V\left(\tau,x\right)$ as defined in Equation \eqref{DP}. For numerical convenience, we approximate the original smooth domain by a discrete one $\hat{\Omega}_T\subset\left[ a,b \right] \times \left[ a,b \right] \times \left[ 0,T \right]$, setting $a$ and $b$ in order to cover a set of feasible stock prices. The step of the spatial variables is uniformly set as $\Delta x=\left(b-a \right)/S_x$, being $S_x$ the number of grid points in the x- direction. The step of the temporal variable is also uniformly set as $\Delta \tau=T/T_x$ being $T_x$ the number of grid points in the $\tau$- direction. We define $n$ as the step of the iterative problem and, given $n$, $m$ as each of the temporal steps. Hence, we define the solution to the $n$-step iterative problem as $U^{m}_{ij} = U\left(x_i,y_j,m\Delta\tau\right)$ where $0\leq i,j \leq S_x$ and $0\leq m \leq T_x$.\\

\noindent At each step $n$, we have to solve a linear problem involving both second and mixed derivatives of $U$. If we apply directly a finite difference scheme, the invertible matrix would not be tridiagonal as mixed spatial derivatives have to be considered. Hence, we apply an Alternating Direction Implicit (ADI) method with a Finite Difference approach (FD).\\

\noindent We follow the steps presented in the work of \cite{jeong2013comparison} to determine the two stages of the procedure. The main idea of the ADI method is to generate an intermediate step $m+1/2$ between steps $m$ and $m+1$. The first half step is taken implicitly in the x-direction and explicitly in the y-direction. The other half step is taken implicitly in the y-direction and explicitly in the x-direction. \\

\noindent In the first place, we split the temporal derivative as shown on (\ref{temp_der}) 

\begin{align}
U_{\tau} \simeq \frac{U^{m+1}_{ij}-U^{m}_{ij}}{\Delta t}=\frac{U^{m+1}_{ij}-U^{m+\frac{1}{2}}_{ij}}{\Delta t}+\frac{U^{m+\frac{1}{2}}_{ij}-U^{m}_{ij}}{\Delta t} \label{temp_der}.
\end{align} 

\noindent Then, we discretize the lineal operator

\begin{align*}
\mathcal{L}=\frac{1}{2} \sumai \sumaj \sigma_i \sigma_j \rho_{ij} \frac{\partial^2 U}{\partial x_i \partial x_j} + \sumai \frac{\partial U}{\partial x_i} \left( r-\frac{\sigma_i^2}{2} \right) -rU, 
\end{align*}

\noindent setting

\begin{align*}
\frac{\partial U}{\partial x_1}&\simeq\frac{U^m_{i+1,j}-U^m_{i,j}}{\Delta x},\\
\frac{\partial U}{\partial x_2}&\simeq\frac{U^m_{i,j+1}-U^m_{i,j}}{\Delta x},\\
\frac{\partial^2 U}{\partial x_1^2}&\simeq\frac{U^m_{i+1,j}-2U^m_{i,j}+U^m_{i-1,j}}{\Delta x^2},\\
\frac{\partial^2 U}{\partial x_2^2}&\simeq\frac{U^m_{i,j+1}-2U^m_{i,j}+U^m_{i,j-1}}{\Delta x^2},\\
\frac{\partial^2 U}{\partial x_1 x_2}&\simeq \frac{U^m_{i+1,j+1}+U^m_{i-1,j-1}-U^m_{i-1,j}-U^m_{i,j-1}}{4\Delta x^2}.
\end{align*}

\noindent As in section 2.1 of \cite{jeong2013comparison}, we split the discretization of the operator $\mathcal{L}$ between

\begin{align*}
\mathcal{L}^{x}&= \frac{\sigma_1^2}{4} \frac{U^{m+\frac{1}{2}}_{i+1,j}-2U^{m+\frac{1}{2}}_{i,j}+U^{m+\frac{1}{2}}_{i-1,j}}{\Delta x^2} + \frac{\sigma_2^2}{4} \frac{U^m_{i,j+1}-2U^m_{i,j}+U^m_{i,j-1}}{\Delta x^2} + \frac{1}{2} \sigma_1 \sigma_2 \rho  \frac{U^m_{i+1,j+1}+U^m_{i-1,j-1}-U^m_{i-1,j}-U^m_{i,j-1}}{4\Delta x^2} \nonumber \\ 
&+\frac{1}{2} \left(r-\frac{\sigma_1^2}{2} \right) \frac{U^{m+\frac{1}{2}}_{i+1,j}-U^{m+\frac{1}{2}}_{i,j}}{\Delta x} + \frac{1}{2} \left(r-\frac{\sigma_2^2}{2} \right) \frac{U^{m}_{i,j+1}-U^{m}_{i,j}}{\Delta x}-\frac{1}{2}rU^{m+\frac{1}{2}}_{ij}
\end{align*}

\noindent and

\begin{align*}
\mathcal{L}^{y}&= \frac{\sigma_1^2}{4} \frac{U^{m+\frac{1}{2}}_{i+1,j}-2U^{m+\frac{1}{2}}_{i,j}+U^{m+\frac{1}{2}}_{i-1,j}}{\Delta x^2} + \frac{\sigma_2^2}{4} \frac{U^{m+1}_{i,j+1}-2U^{m+1}_{i,j}+U^{m+1}_{i,j-1}}{\Delta x^2} + \frac{1}{2} \sigma_1 \sigma_2 \rho  \frac{U^{m+\frac{1}{2}}_{i+1,j+1}+U^{m+\frac{1}{2}}_{i-1,j-1}-U^{m+\frac{1}{2}}_{i-1,j}-U^{m+\frac{1}{2}}_{i,j-1}}{4\Delta x^2} \nonumber  \\ 
&+\frac{1}{2} \left(r-\frac{\sigma_1^2}{2} \right) \frac{U^{m+\frac{1}{2}}_{i+1,j}-U^{m+\frac{1}{2}}_{i,j}}{\Delta x} + \frac{1}{2} \left(r-\frac{\sigma_2^2}{2} \right) \frac{U^{m+1}_{i,j+1}-U^{m+1}_{i,j}}{\Delta x}-\frac{1}{2}rU^{m+1}_{ij}
\end{align*}

\noindent obtaining a two-stage full scheme

\begin{align*}
\frac{U^{m+\frac{1}{2}}_{ij}-U^{m}_{ij}}{\Delta t}=\mathcal{L}^x U_{ij}^{m+\frac{1}{2}},\\
\frac{U^{m+1}_{ij}-U^{m+\frac{1}{2}}_{ij}}{\Delta t}=\mathcal{L}^y U_{ij}^{m+1}.
\end{align*}
 
\noindent As the problem \eqref{lin_prob_3} contains the linear function $G$, we add this term on the second stage of the procedure by redefining $\tilde{\mathcal{L}}^y=\mathcal{L}^y-G$.

\begin{align*}
\frac{U^{m+1}_{ij}-U^{m}_{ij}}{\Delta t}=\mathcal{L}^x U_{ij}^{m+\frac{1}{2}}+\mathcal{L}^y U_{ij}^{m+1}-G\left( \cdot\right)=\mathcal{L}^x U_{ij}^{m+\frac{1}{2}} + \tilde{\mathcal{L}}^y U_{ij}^{m+1}.
\end{align*} 
 
\noindent The proposed framework is used to calculate first $U^{m+\frac{1}{2}}$ and then $U^{m+1}$. The most important gain with the ADI method is that it only requires the solution of two tridiagonal sets of equations at each time step. 
 
\subsection{Numerical Results} 

\noindent In order to implement the framework proposed in section \ref{demostracion}, we select a type of multi-asset option and a transaction costs function. First, we price a best cash-or-nothing option call on two assets. This option pays out a predefined cash amount $K$ if assets $S_1$ or $S_2$ are above or equal to the strike price $X$. The closed-form formula is presented on \cite{haug2007complete} as

\begin{align}
c_{best}=Ke^{-rT}\left[M\right.&\left.\left(y,z_1;-\rho_1\right)+M\left(-y,z_2;-\rho_2\right) \right] \\ \label{cbest}
y=\frac{\ln\left( S_1/S_2\right)+\frac{\sigma^2}{2}T}{\sigma \sqrt{T}}&, \quad \quad \sigma=\sqrt{\sigma_1^2+\sigma_2^2-2\sigma_1\sigma_2 \rho}\nonumber\\
z_1=\frac{\ln\left( S_1/X\right)+\frac{\sigma_1^2}{2}T}{\sigma_1 \sqrt{T}}&, \quad \quad z_2=\frac{\ln\left( S_2/X\right)+\frac{\sigma_2^2}{2}T}{\sigma_2 \sqrt{T}} \nonumber\\
\rho_1=\frac{\sigma_1-\rho}{\sigma}&, \quad \quad \rho_2=\frac{\sigma_2-\rho}{\sigma} \nonumber 
\end{align}

\noindent where $S_1$ and $S_2$ are the stock prices, $\sigma_1$ and $\sigma_2$ are the volatilities, $\rho$ is the correlation between both assets, $T$ is the maturity and $M\left(a,b;\rho\right)$ is

$$
M\left(a,b;\rho\right)=\frac{1}{2\pi\sqrt{1-\rho^2}} \int_{-\infty}^{a} \int_{-\infty}^{b} \exp \left[-\frac{x^2+y^2-2\rho xy}{2\left(1-\rho^2 \right)} \right] dx \, dy.
$$

\noindent Second, given the nonlinear function $G$ defined on \eqref{G_change}, we choose an exponential decreasing transaction costs function defined as

\begin{align*}
C\left(x\right)=C_0 \,e^{-\tilde{k} \, x}
\end{align*}

\noindent for each asset $x$. Hence, by recalling \eqref{rtc}, we can see that

\begin{align*}
E\left[ C\left(\sqrt{\Delta t} \, \left| \Phi_i \right| \right) \left| \Phi_i \right| \right]  & =  \int_{0}^{+\infty} C_0 \, e^{-\tilde{k} \sqrt{\Delta t} x } \, \frac{2 \, x}{\sqrt{2\pi \Theta_i}} e^{-x^2/2\Theta_i} \, dx \\
& =  C_0\sqrt{\frac{2}{\pi}} \, \int_{0}^{+\infty} e^{-\tilde{k} \sqrt{\Delta t} x} \, \frac{x}{\sqrt{\Theta_i}} \, e^{-x^2/2\Theta_i} \, dx \\ 
&= C_0\sqrt{\frac{2}{\pi}} \, \int_{0}^{+\infty} e^{-\tilde{k} \sqrt{\Delta t \Theta_i} y} \,\sqrt{\Theta_i} y \, e^{-y^2/2} \, dy\\
& =  C_0 \sqrt{\Theta_i} \sqrt{\frac{2}{\pi}} \left[ 1 - e^{\tilde{k}^2 \Delta t \Theta_i /2} \, \tilde{k} \sqrt{\Delta t \Theta_i} \, \hbox{ERFC}\left(\tilde{k}\sqrt{\frac{\Delta t \Theta_i}{2}} \right) \right].
\end{align*}

\noindent Then,

\begin{align}
G\left(S,D^2V \right)= C_0 \, \sqrt{\frac{2}{\pi}}\sum_{i=1}^2 \frac{S_i}{\Delta t} \sqrt{\Theta_i}  \left[ 1 - e^{\tilde{k}^2 \Delta t \Theta_i /2} \, \tilde{k} \sqrt{\Delta t \Theta_i} \, \hbox{ERFC}\left(\tilde{k}\sqrt{\frac{\Delta t \Theta_i}{2}} \right) \right]. 
\end{align}\label{exponentialF}

\noindent Table \ref{Table-parameters} presents the parameters chosen for the numerical implementation. Three different tests were then applied by varying the values of the stocks price, volatility, interest rate and strike among others. \\

\begin{table}[hbtp]
\centering
\begin{tabular}{lllllll}
\multirow{2}{*}{ Parameters} & \multicolumn{2}{c}{Testing 1} & \multicolumn{2}{c}{Testing 2} & \multicolumn{2}{c}{Testing 3} \\
&  Asset 1  & Asset 2 & Asset 1 & Asset 2 & Asset 1 & Asset 2\\ \midrule
$\sigma$ & $0.30$ & $0.15$ & $0.05$ & $0.1$ & $0.2$ & $0.2$  \\ 
$\rho$ & \multicolumn{2}{c}{$0.5$} & \multicolumn{2}{c}{$-0.3$} & \multicolumn{2}{c}{$0.2$} \\
$r$ &   \multicolumn{2}{c}{$0.08$} & \multicolumn{2}{c}{$0.02$} & \multicolumn{2}{c}{$0.1$}   \\ 
$T$ &  \multicolumn{2}{c}{$1$ year} & \multicolumn{2}{c}{$1$ year} & \multicolumn{2}{c}{$1$ year} \\ 
$K$ & \multicolumn{2}{c}{$5$}     & \multicolumn{2}{c}{$8$}  & \multicolumn{2}{c}{$6$}  \\ 
$X$ & \multicolumn{2}{c}{$30$}     & \multicolumn{2}{c}{$40$} & \multicolumn{2}{c}{$15$}   \\  \midrule
$\Delta x$ & \multicolumn{2}{c}{$1$} & \multicolumn{2}{c}{$1$} & \multicolumn{2}{c}{$1$} \\ 
$\Delta t_{TC} $ & \multicolumn{2}{c}{$1/261$}  & \multicolumn{2}{c}{$1/261$} & \multicolumn{2}{c}{$1/261$} \\ \midrule
$C_0$ &\multicolumn{2}{c}{$0.005$} & \multicolumn{2}{c}{$0.001$} & \multicolumn{2}{c}{$0.003$} \\ 
$\tilde{k}$ & \multicolumn{2}{c}{$1$} & \multicolumn{2}{c}{$0.5$} & \multicolumn{2}{c}{$0.7$} \\ \midrule
\end{tabular}
\caption{Numerical implementation parameters}
\label{Table-parameters}
\end{table}

\noindent We analyze different aspects of the ADI algorithm implemented and the dynamics of the general transaction cost model proposed. We focus on the following three points:

\begin{itemize}
\item[1.] Measure the impact of transaction costs in the option price.
\item[2.] Given an optimal number of iterations such that convergence is achieved, analyze the sensitivity of the final output to the choice of $\Delta t_{TC}$.
\item[3.] Given the iteration procedure proposed in \eqref{lin_prob_3}, determine the optimal number of $n$ such that the convergence is achieved and how the error diminishes as more steps are added.
\end{itemize}

\subsubsection{Transaction Costs impact}

\noindent Figure \eqref{tcimpact1} to Figure \eqref{tcimpact3} present the results for both transaction costs function and option price with transaction costs at $t=0$. By recalling the transaction costs function $G$ in \eqref{exponentialF} it can be noted that the costs are proportional to the assets spot price, the size of the second derivatives (i.e. Gamma) of the option price and the volatilities of each asset.\\

\noindent Testing 1 is defined based on a strike at $X=30$ with a premium paid at $K=5$. Figure \eqref{tcimpact1a} shows an exponential transaction costs function where the maximum is reached around the strike value. This behavior is expected as the maximum of Gamma is found near the at-the-money price. As these derivatives converge to zero when deep out-of-the-money or in-the-money, the transactions costs function vanishes. Figure \eqref{tcimpact1b} describes the dynamics of the option price when considering the transaction costs function.\\

\begin{figure}[hbtp]
\centering
\begin{minipage}[c]{0.45\textwidth}
\includegraphics[scale=0.55]{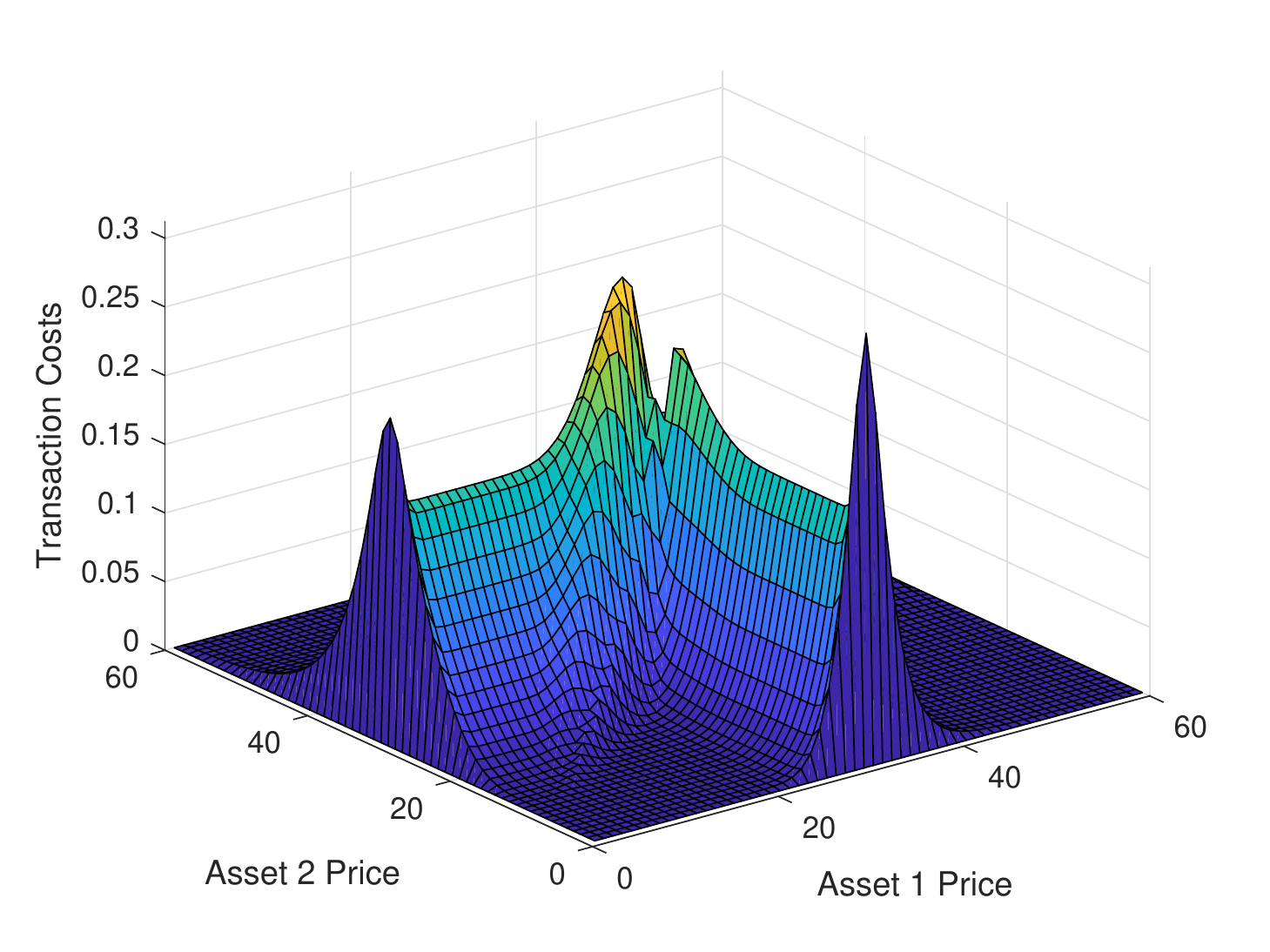}
	\subcaption{Transaction costs at time $t=0$.}\label{tcimpact1a}
\end{minipage}
\begin{minipage}[c]{0.45\textwidth}
\includegraphics[scale=0.55]{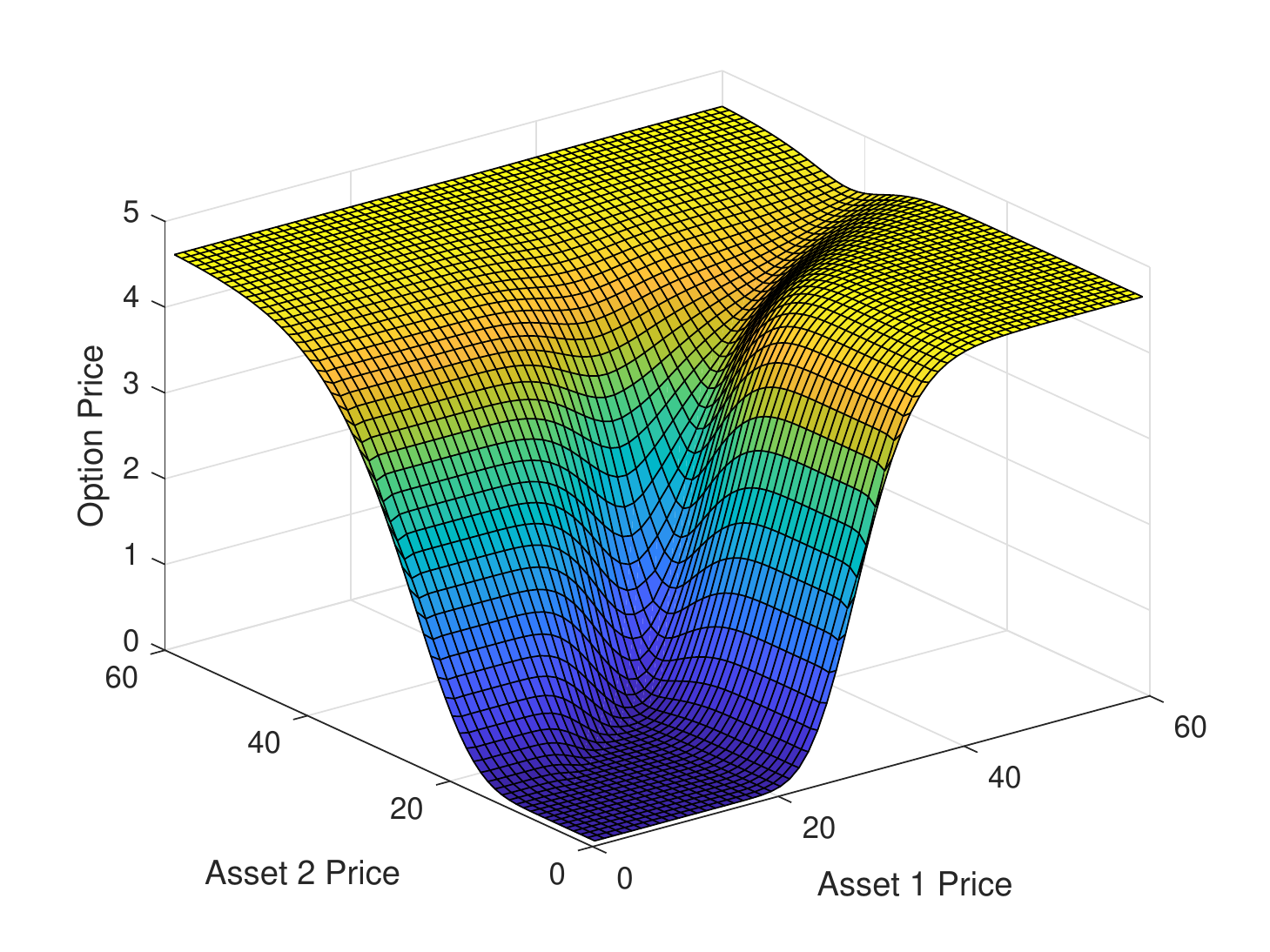}
	\subcaption{Option price at time $t=0$.}\label{tcimpact1b}
\end{minipage}
\caption{Testing 1.}\label{tcimpact1}
\end{figure}

\noindent The results for Testing 2 framework are presented in Figure \eqref{tcimpact2}. It is defined a strike value at $X=40$, a premium paid at $K=8$ and with two low volatile assets. The transaction costs function presented in Figure \eqref{tcimpact2a} shows a similar increasing pattern on its value up to the at-the-money region. Moreover, the higher volatilty of Asset 2 is observed by noting that transaction costs are higher when fixing a price for Asset 2 in comparison with Asset 1. As the option gets out-of-the-money, the shape of the transaction costs function becomes more symmetric and smoother.\\

\begin{figure}[hbtp]
\centering
\begin{minipage}[c]{0.45\textwidth}
\includegraphics[scale=0.55]{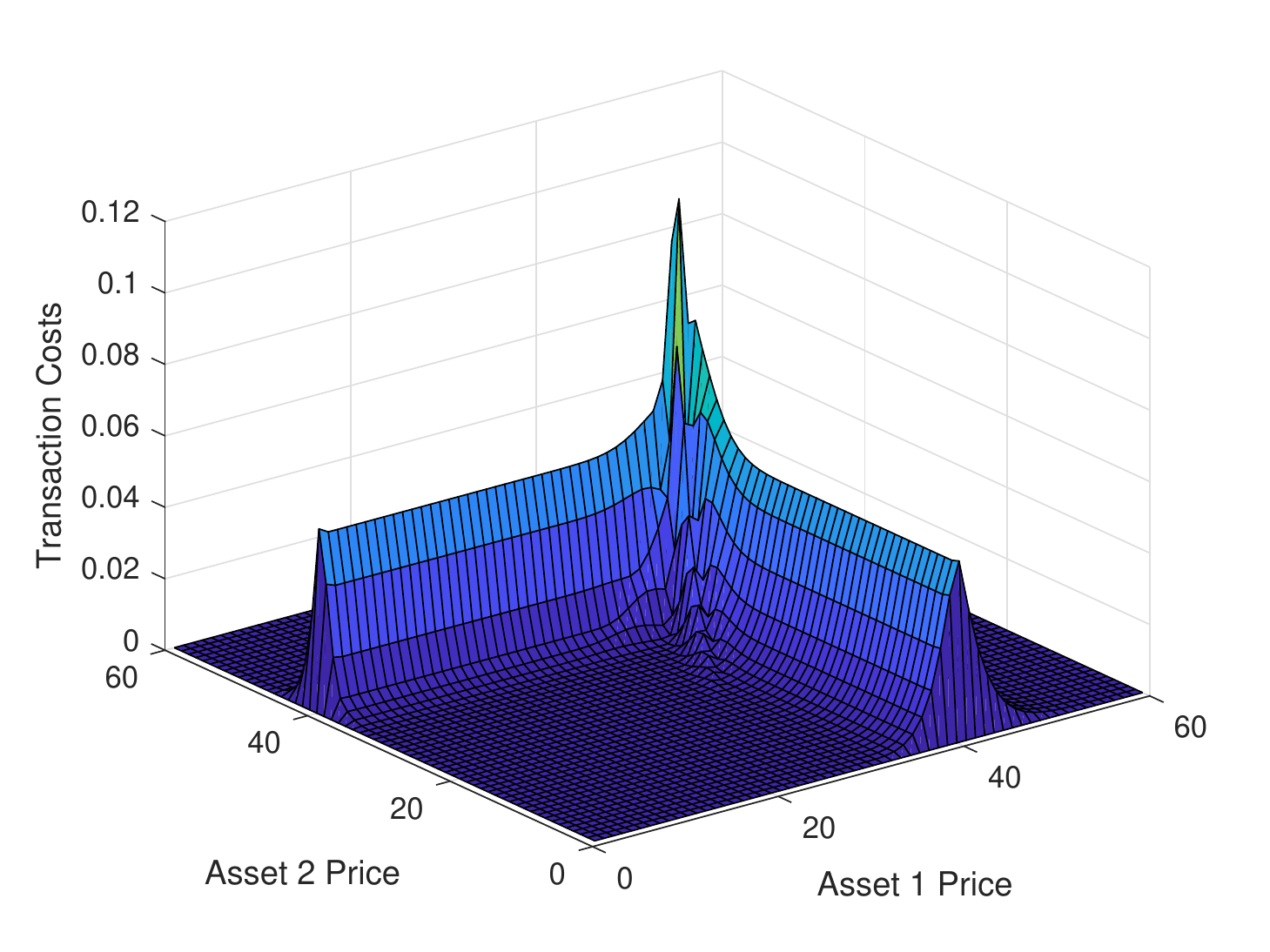}
\subcaption{Transaction costs at time $t=0$.}\label{tcimpact2a}
\end{minipage}
\begin{minipage}[c]{0.45\textwidth}
\includegraphics[scale=0.55]{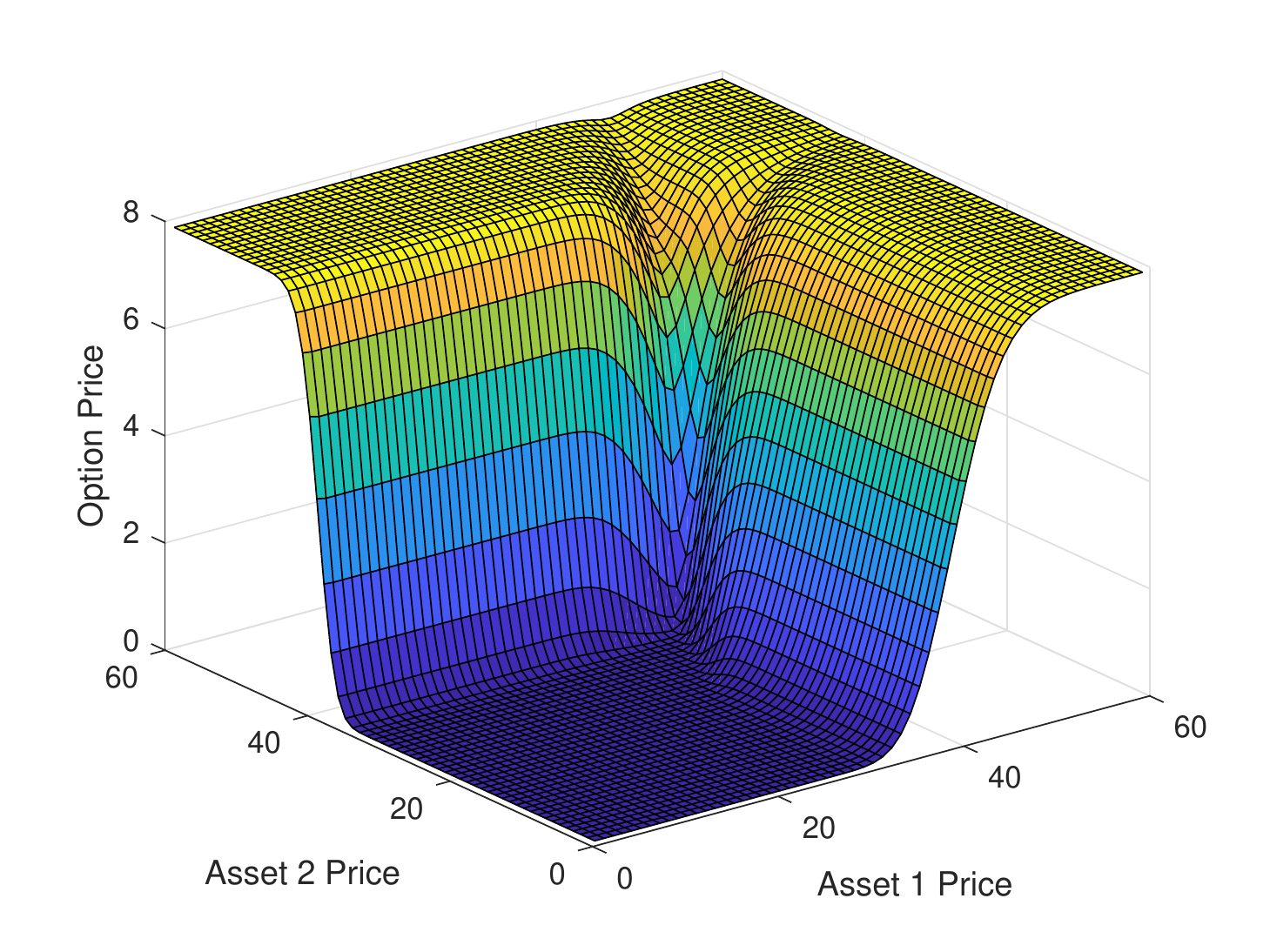}
	\subcaption{Option price at time $t=0$.}\label{tcimpact2b}
\end{minipage}
\caption{Testing 2.}\label{tcimpact2}
\end{figure}

\noindent Testing framework 3 is presented on Figure \eqref{tcimpact3}. Both assets are defined to have the same volatility but almost uncorrelated. The strike price is fixed at $X=15$ and the premium paid is equal to $K=6$. The symmetry observed in both Figures \eqref{tcimpact3a} and \eqref{tcimpact3b} are expected due to the design of the testing. Again, the maximum of the transaction costs function is reached when the prices are near the strike value and the converge to zero is seen when the option is deeper out-of-the-money or in-the-money. The option prices reflect the complementary pattern by showing a decrease in its value when the option is near the strike price.\\

\begin{figure}[hbtp]
\centering
\begin{minipage}[c]{0.45\textwidth}
\includegraphics[scale=0.55]{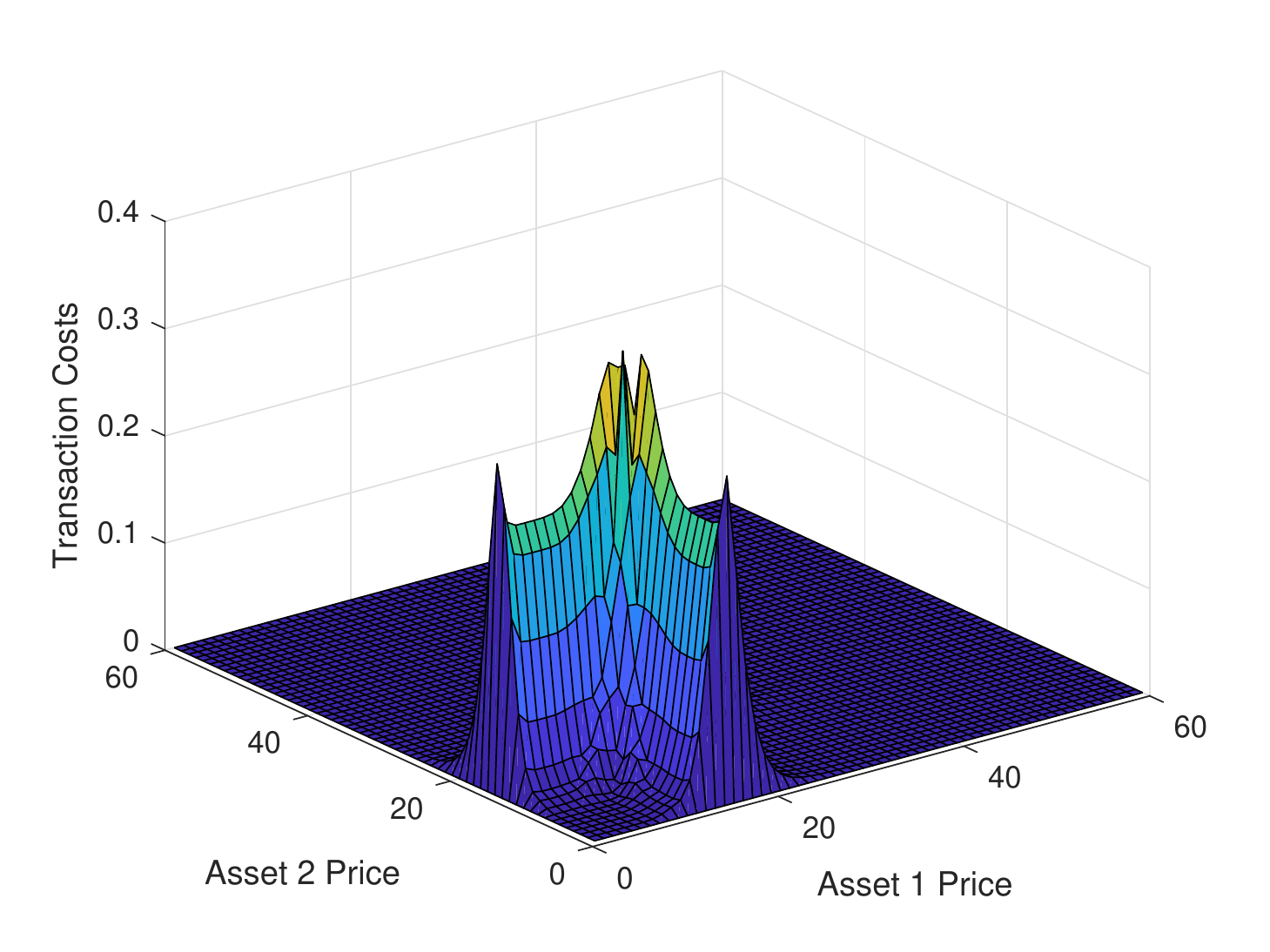}
	\subcaption{Transaction costs at time $t=0$.}\label{tcimpact3a}
\end{minipage}
\begin{minipage}[c]{0.45\textwidth}
\includegraphics[scale=0.55]{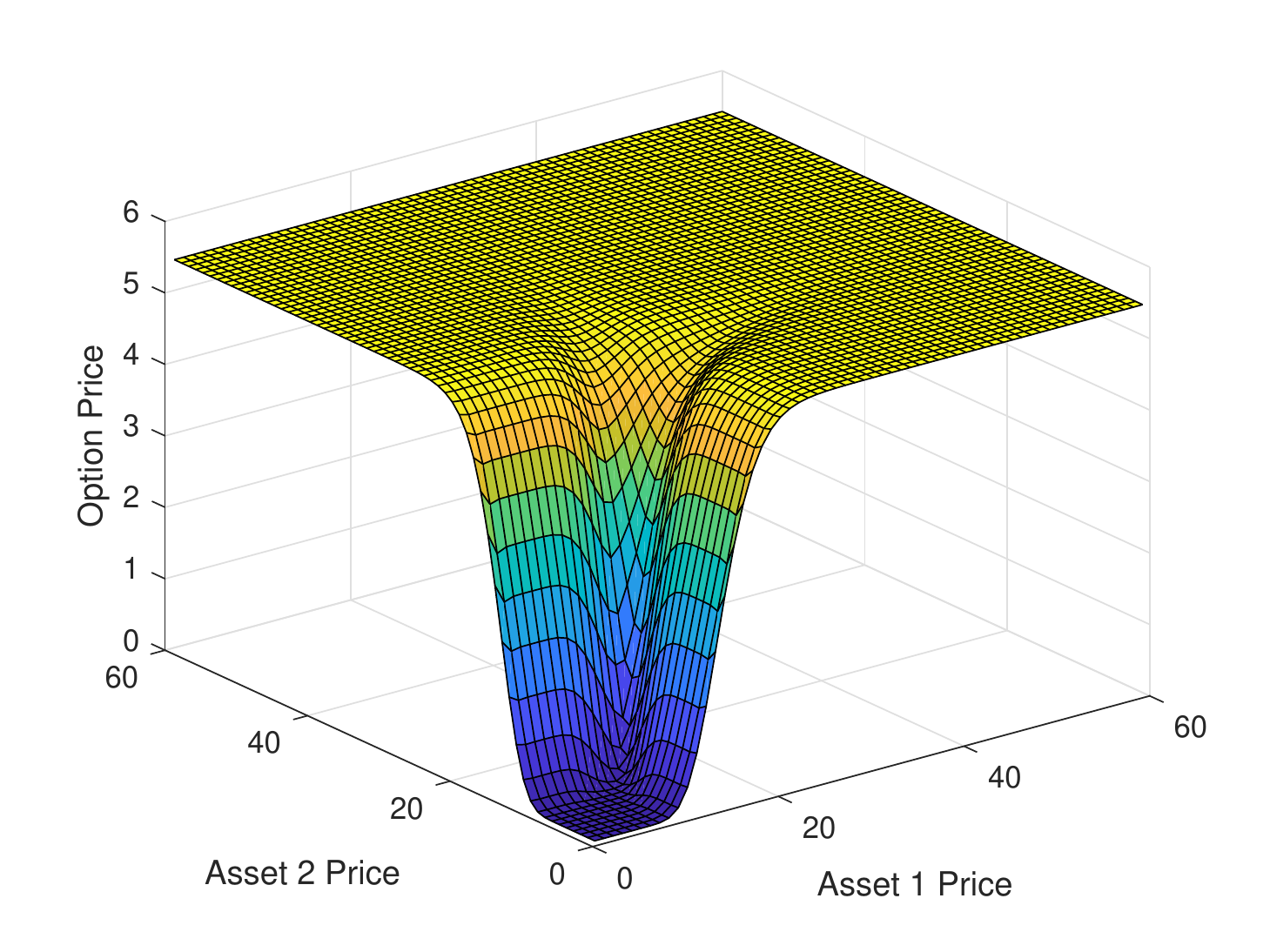}
	\subcaption{Option price at time $t=0$.}\label{tcimpact3b}
\end{minipage}
\caption{Testing 3.}\label{tcimpact3}
\end{figure}

\subsubsection{Sensitivity of the option to changes in $\Delta t_{TC}$}

\noindent In this section we study the sensitivity of the option price to changes in the size of the time-step $\Delta t_{TC}$ for rebalancing the replicant portfolio. By observing Equation \eqref{exponentialF}, it can be seen that the transaction costs function tends to infinity if $\Delta_{TC}$ tends to zero. Hence, we expect to see this results in the numerical testing. For this purpose, we ran Testing 2 framework under $100$ possible values of $\Delta_{TC}$ ranging from $7.6E-05$ (approximately rebalancing every $29$ minutes) to $0.007$ (approximately rebalancing every $2$ years). The results can be observed in Figures \ref{tau1} and \ref{tau2}. \\

\noindent Figure \ref{tau1} presents two different plots which show two states of the option price. In the panel on the left side it can be observed how the transaction costs behave when the price of Asset 1 is equal to $S_1 = 15$ and the parameter $\tau = 0$. It can be noted that the maximum value is reached at-the-money with a transaction cost of almost $2$. This maximum is also reached when $\Delta_{TC}$ is minimum. When the option becomes deeper in-the-money and out-of-the-money and $\Delta_{TC}$ increases, transaction costs tend to zero. A similar pattern is observed in the figure on the right side. The main difference resides on how the transaction costs highly increase as the Asset 1 price is set as of $S_1 = 40$. As Gamma is maximum near the at-the-money moneyness of the option, transaction costs explode near this pricing area. As it can be seen, the costs are of $34$ when both prices are set as $40$. This will be the case in which rebalancing is done too often so that the option price becomes negative due to the high amount of transaction costs payed.\\

\noindent The plot of the left side of Figure \ref{tau2} shows  the same dynamics for the case when Asset 1 price is equal to $S_1 = 55$. These dynamics are similar to the one observed in the plot of the left side of Figure \ref{tau1}. As Gamma decreases when prices are to low or to high, transaction costs  present the usual spike near the strike value. Moreover, this costs tend to zero as $\Delta_{TC}$ becomes larger and rebalancing is done less periodically. The plot on the right helps us to understand how far can the transaction costs increase. For this purpose, we fixed the prices of both assets as of $S_1 = 40$ and $S_2 = 40$. Then, we plotted the value of the transaction costs with respect to its time to maturity and the size of $\Delta_{TC}$. It can be observed that when $t=T$, transaction costs are near to zero as the option price is the predefined payoff. As time passes and the payoff is discounted, transaction costs increases as Gamma increases. When we reach $t=0$, transaction costs grow up to $34$. This result helps us to confirm the following expected conclusion: As the frequency of rebalancing of the replicant portfolio increases, transaction costs increase such that after a certain point of time, the option price turns into negative and the model becomes ill-posed.  

\begin{figure}[hbtp]
\centering
\begin{minipage}[t]{.5\linewidth}
\includegraphics[scale=0.6]{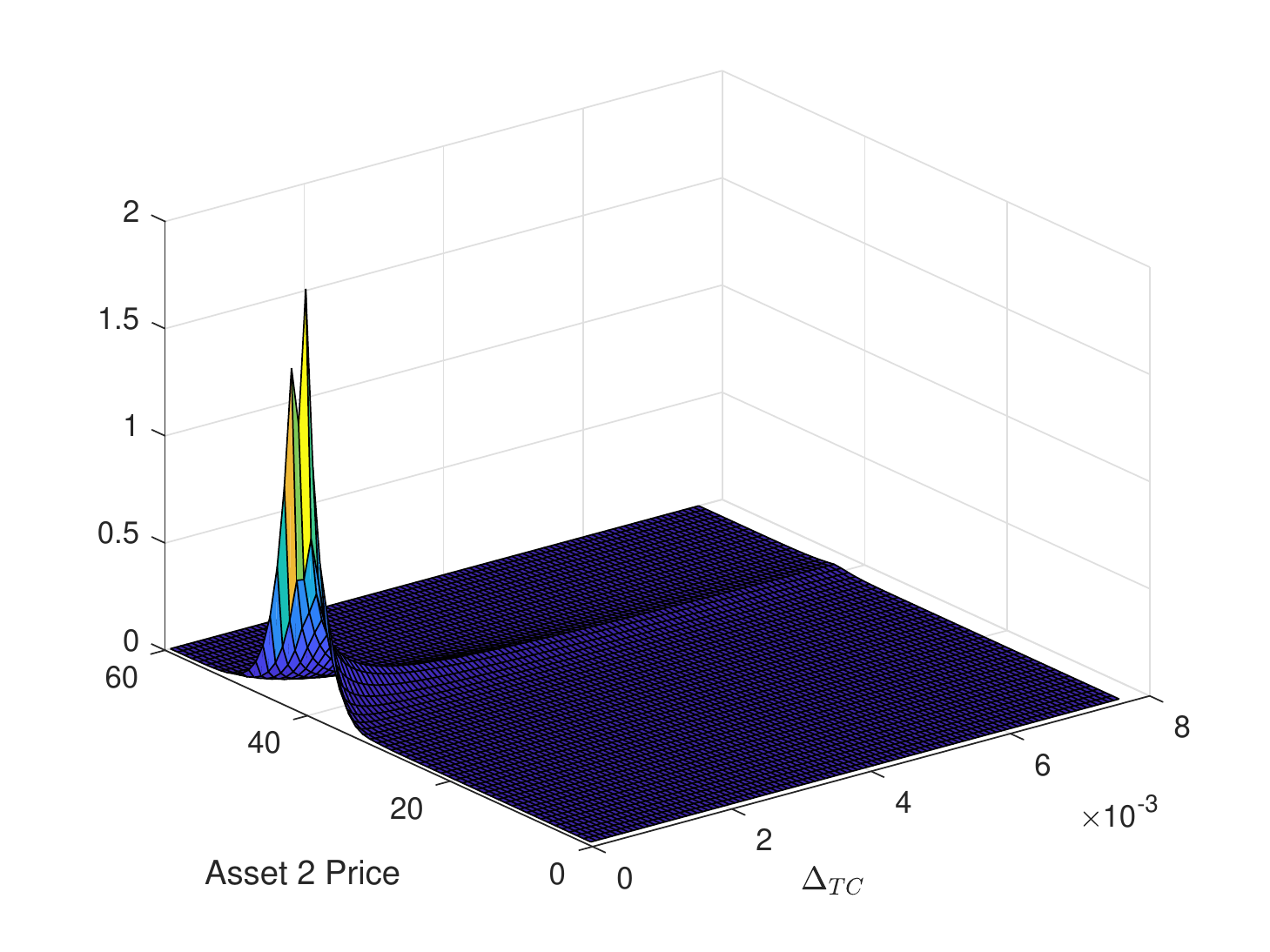}
\centering
\end{minipage}%
\begin{minipage}[t]{.5\linewidth}
\includegraphics[scale=0.6]{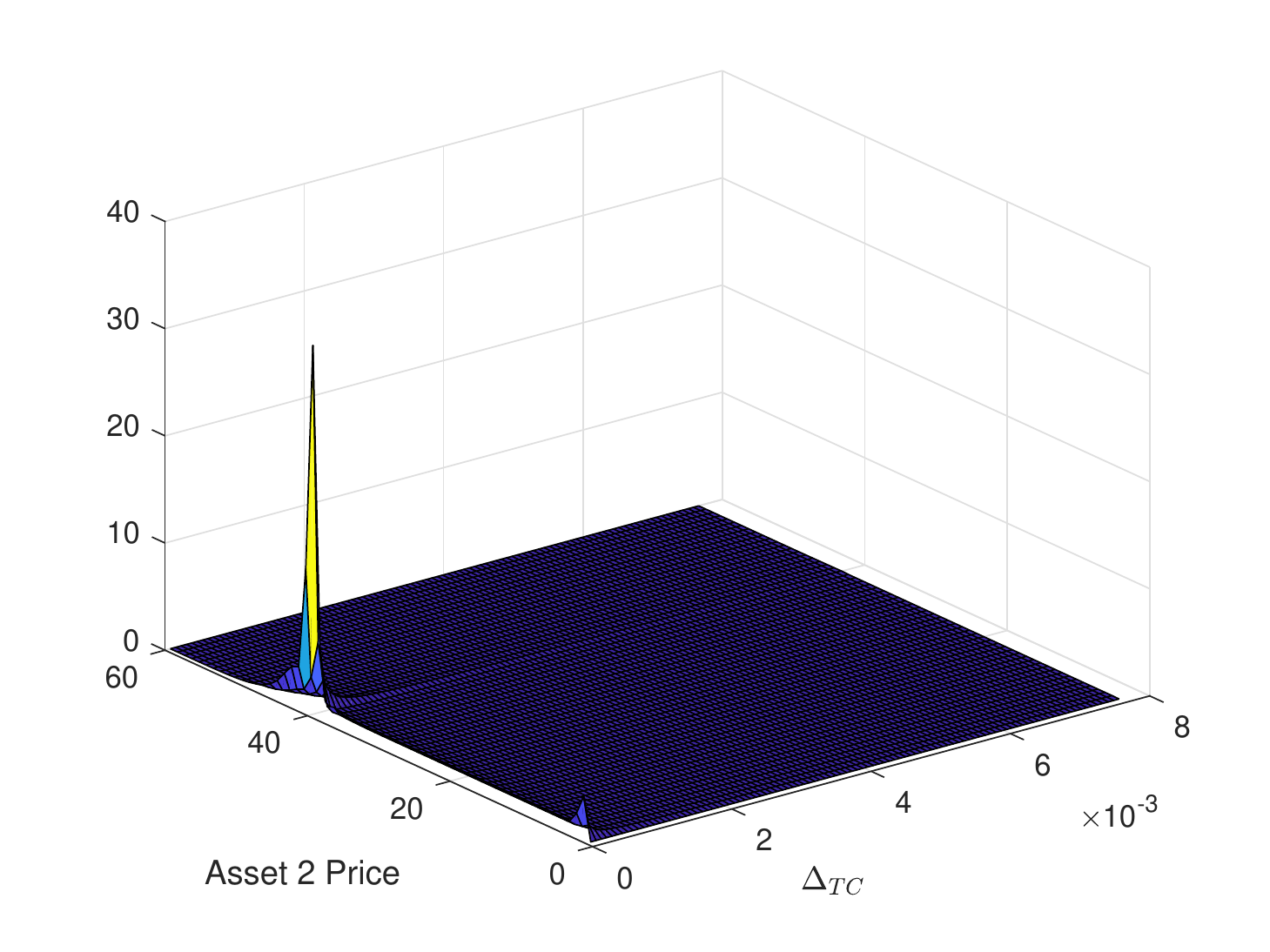}
\centering
\end{minipage}
\caption{Testing 2 - The figure on the left shows the transaction costs function at time $\tau = 0$ when Asset 1 price is equal to $15$. The figure on the right shows the transaction costs function at time $\tau = 0$ when Asset 1 price is equal to $40$.} \label{tau1}
\end{figure}

\begin{figure}[hbtp]
\centering
\begin{minipage}[t]{.5\linewidth}
\includegraphics[scale=0.6]{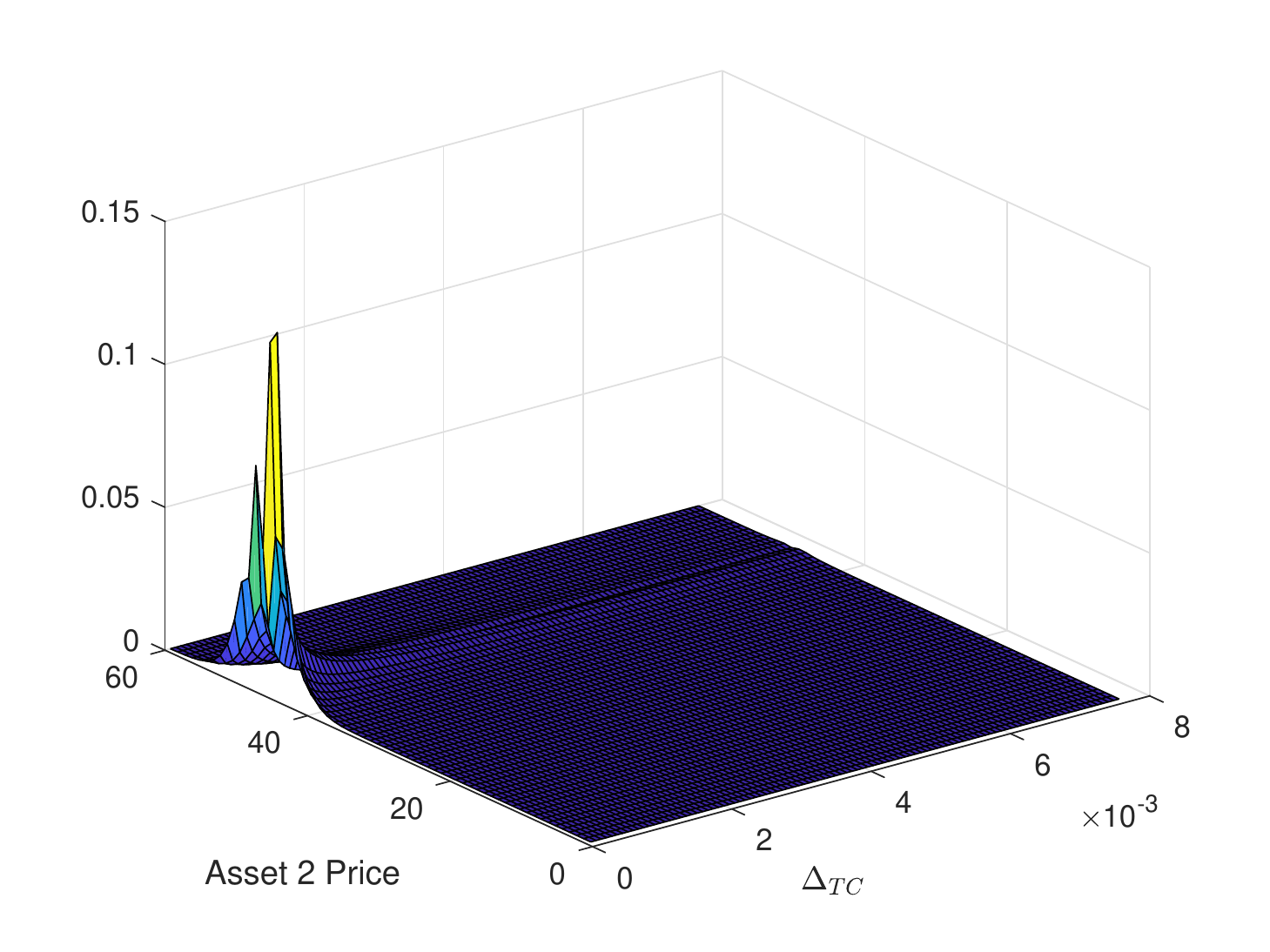}
\centering
\end{minipage}%
\begin{minipage}[t]{.5\linewidth}
\includegraphics[scale=0.6]{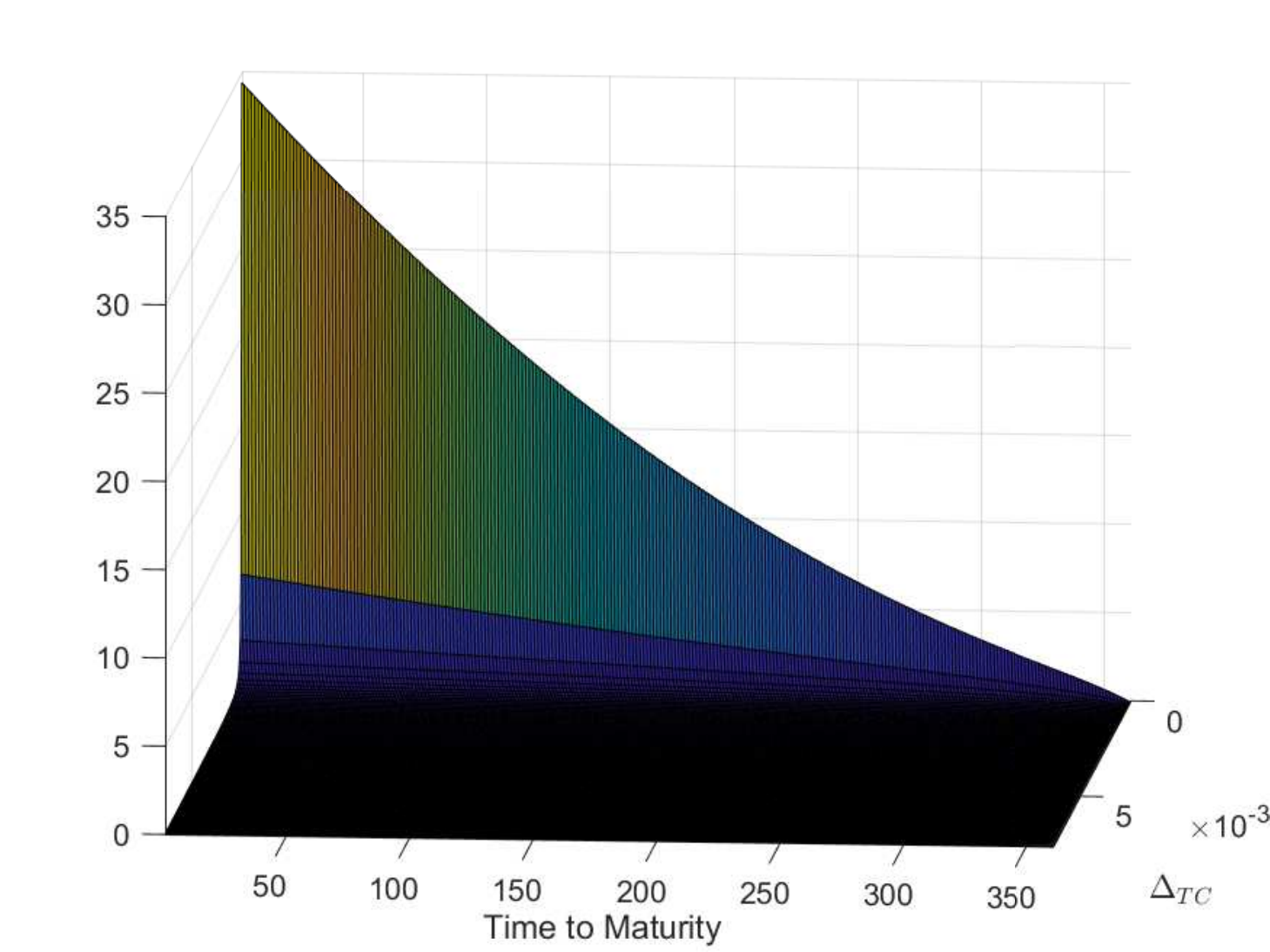}
\centering
\end{minipage}
\caption{Testing 2 - The figure on the left shows the transaction costs function at time $\tau = 0$ when Asset 1 price is equal to $55$. The figure on the right shows how the transaction costs function explodes when the option is at-the-money.}\label{tau2}
\end{figure}

\subsubsection{Convergence Analysis}

\noindent The third item of the previous list involves measuring the convergence of the iterative framework in terms of the differences between consecutive solutions. Our approach will follow from the observation that the result of each iteration correspond to a square matrix. Hence, fixing the last time step $\tau = T$, we calculate the distance between two consecutive final results. For this purpose, we use three different p-norms matrix which are: the $1$ norm, the $2$ norm and the $\infty$ norm. In summary, for each step $n$ and solution $U^n$, we calculate

\begin{align}
d\left(U^n,U^{n-1}\right) = \lVert U^n - U^{n-1} \rVert_p.
\end{align} 

\noindent Our objective is to see that this distance tends to zero as $n$ increases. Figure \eqref{convergence_testing1} to \eqref{convergence_testing3} present the plots of the results for the three scenarios. The figures plot the distance between two consecutive solutions against the iteration step $n+1$. In the right side, we provide a table with all the numerical results up to iteration $n=11$. In the first case, it can be seen that the three norms exponentially decrease to zero and, between iteration $7$ and $8$, convergence is achieved. In the second case, convergence is achieved even faster as by step $2$, the distance between both consecutive results is of order $E-05$. The third case is similar as the first scenario by noting that convergence is achieved at iteration $5$ with a distance between consecutive solutions of order $E-04$.



\begin{figure}[hbtp] 
\begin{subfigure}{0.48\textwidth}\centering`
\includegraphics[scale=0.55]{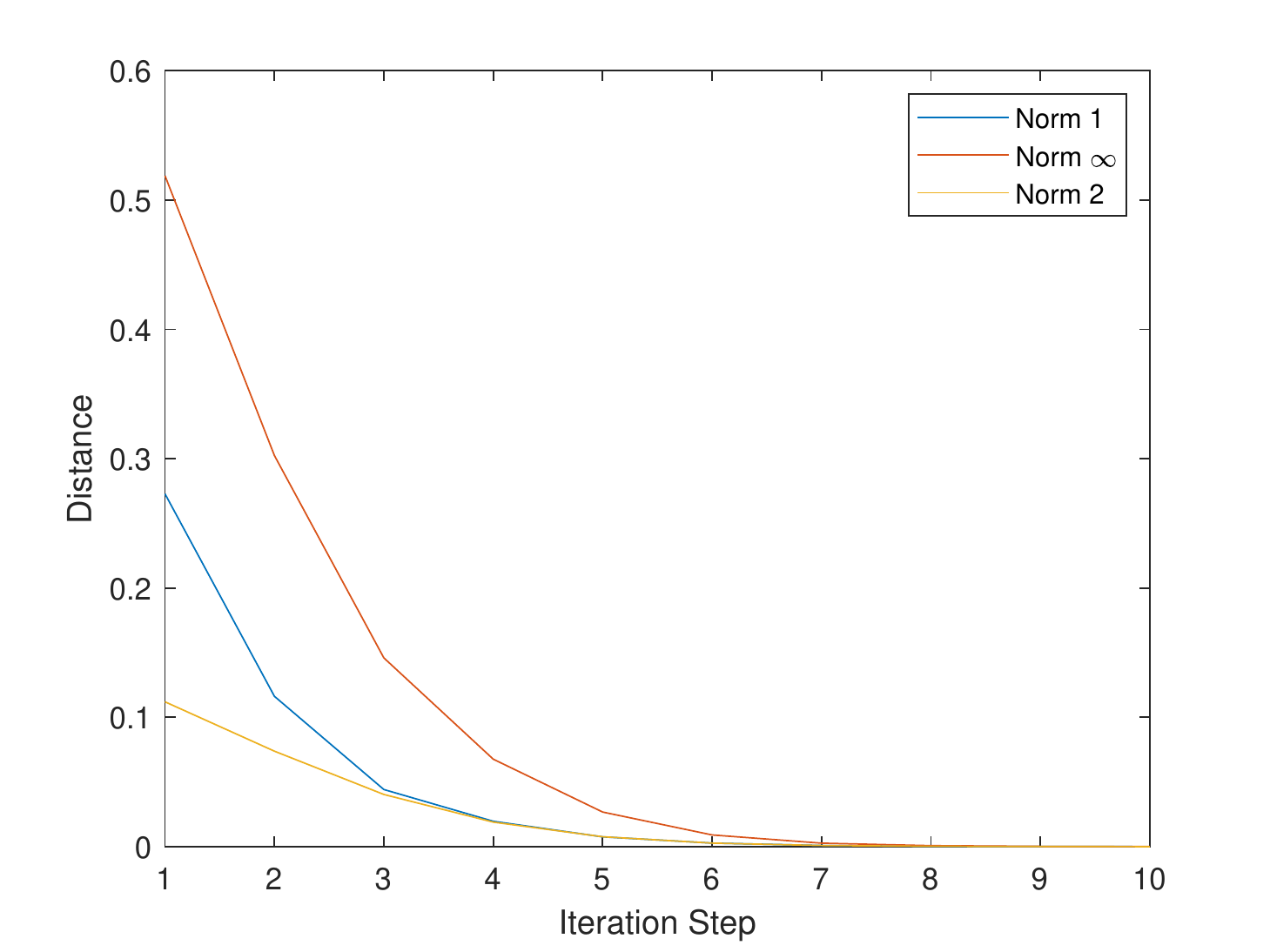}
\end{subfigure}\hspace*{\fill}
\begin{subfigure}[t]{0.48\textwidth}\centering
\footnotesize
\begin{tabular}{llll}
Iteration & Norm 1 & Norm 2 & Norm $\infty$ \\
\hline
1 & 0.2727 & 0.1120 & 0.5187\\
2 & 0.1162 & 0.0738 & 0.3024\\
3 & 0.0442 & 0.0403 & 0.1460\\
4 & 0.0196 & 0.0189 & 0.0676\\
5 & 0.0076 & 0.0076 & 0.0267\\
6 & 0.0028 & 0.0027 & 0.0090\\
7 & 8.8E-4 & 8.6E-4 & 0.0027\\
8 & 2.5E-4 & 2.5E-4 & 7.5E-4\\
9 & 8.1E-5 & 6.7E-5 & 1.9E-4\\
10 & 2.3E-5 & 1.6E-5 & 4.5E-5\\
\hline
\end{tabular}
\end{subfigure}
\caption{Convergence Analysis - Testing 1}\label{convergence_testing1}
\end{figure}
\begin{figure}[hbtp]
\begin{subfigure}{0.48\textwidth}\centering
\includegraphics[scale=0.55]{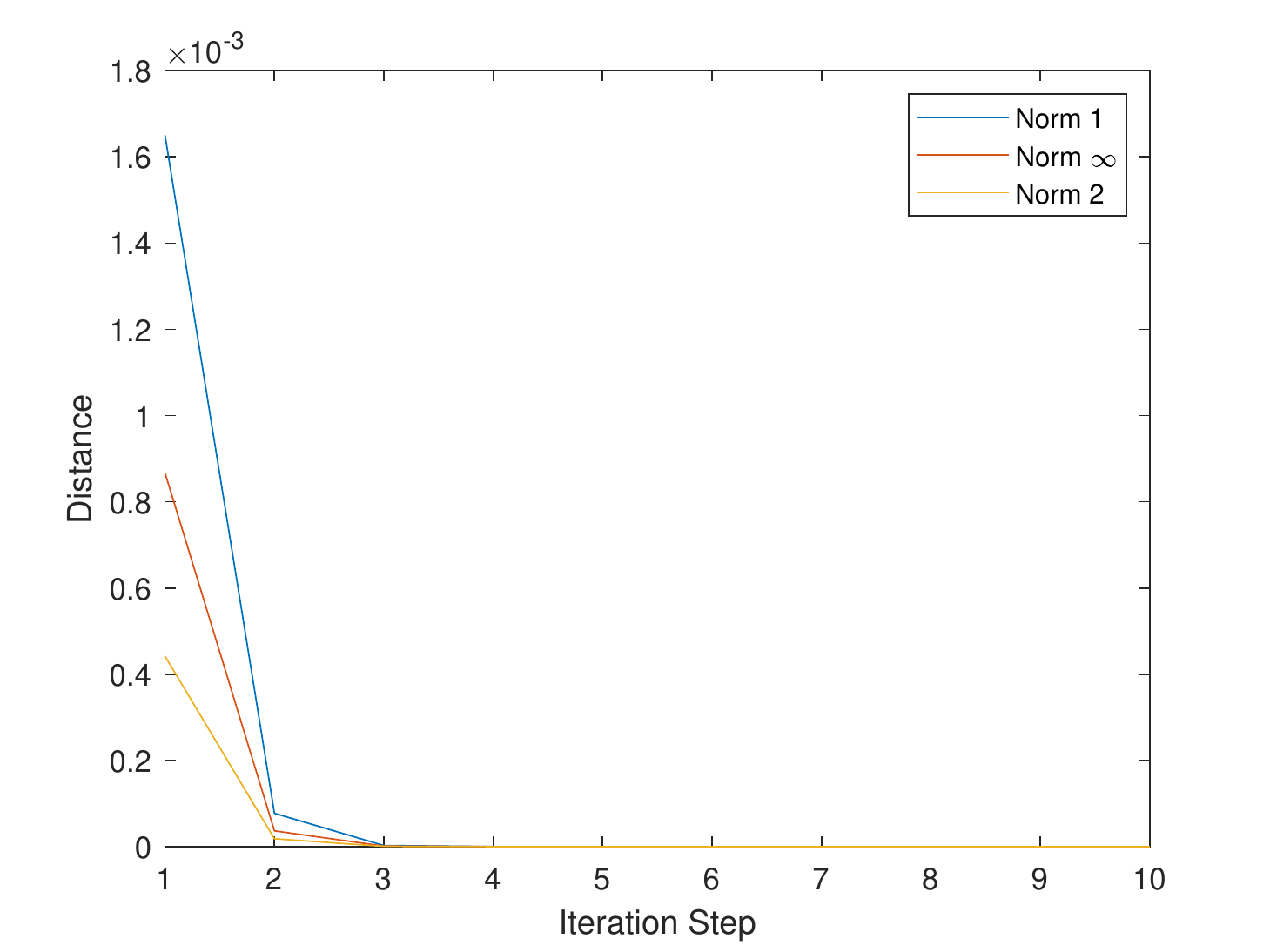}
\end{subfigure}\hspace*{\fill}
\begin{subfigure}[t]{0.48\textwidth}\centering
\footnotesize
\begin{tabular}{llll}
Iteration & Norm 1 & Norm 2 & Norm $\infty$ \\
\hline
1 & 0.0017 & 4.4E-4 & 8.6E-4\\
2 & 7.7E-5 & 1.8E-5 & 3.6E-5\\
3 & 2.6E-5 & 6.9E-7 & 1.2E-6\\
4 & 8.3E-8 & 2.1E-8 & 3.6E-8\\
5 & 2.4E-9 & 6E-10 & 9E-10\\
6 & 6E-11 & 1E-11 & 2E-11\\
7 & 1E-12 & 3E-13 & 5E-13\\
8 & 2E-14 & 8E-15 & 1E-14\\
9 & 7E-16 & 2E-16 & 2E-16\\
10 & 3E-16 & 1E-16 & 1E-16\\
\hline
\end{tabular}
\end{subfigure}
\caption{Convergence Analysis - Testing 2}\label{convergence_testing2}
\end{figure}
\begin{figure}[hbtp]
\begin{subfigure}{0.48\textwidth}\centering
\includegraphics[scale=0.55]{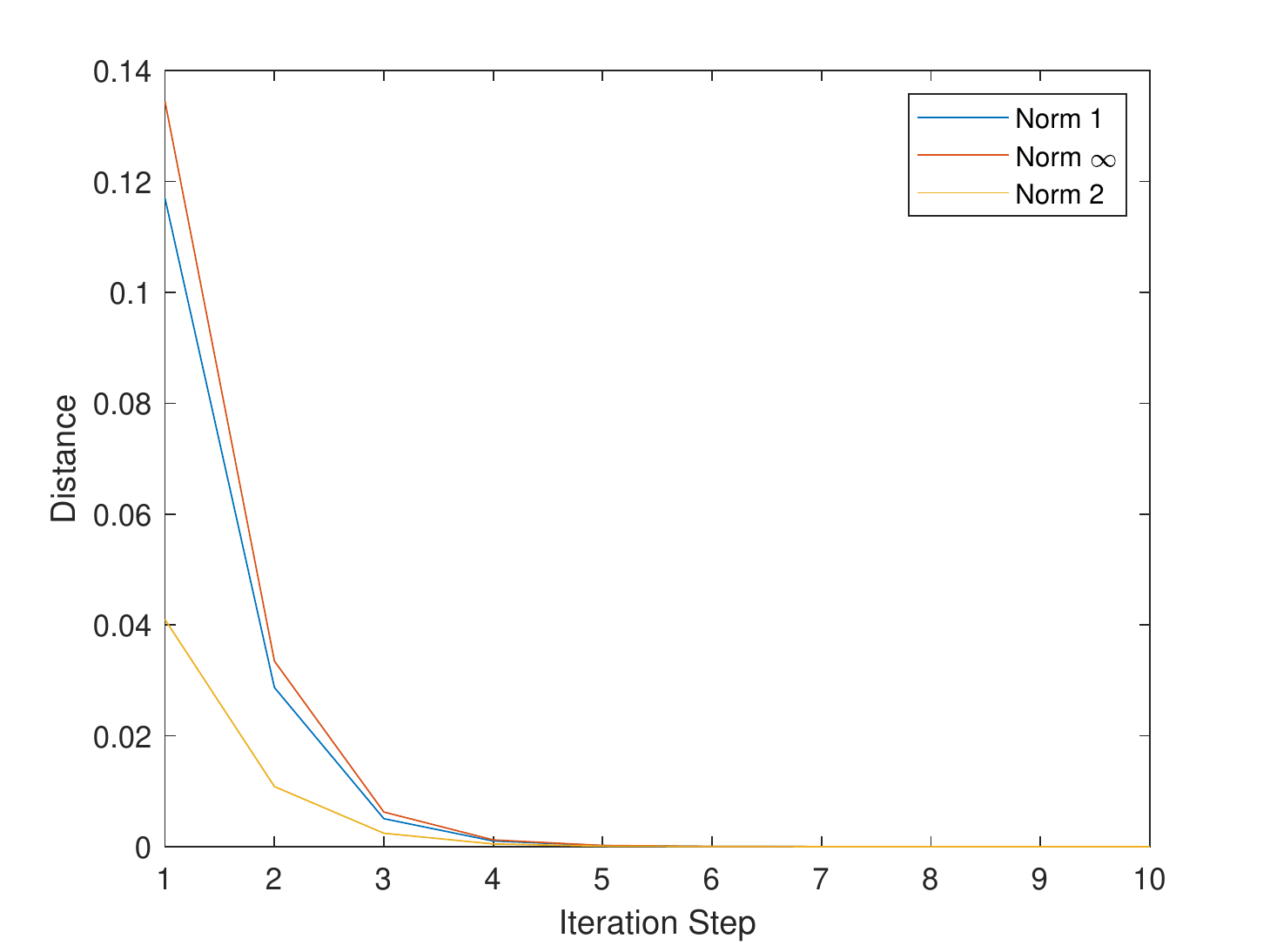}
\end{subfigure}\hspace*{\fill}
\begin{subfigure}[t]{0.48\textwidth}\centering
\footnotesize
\begin{tabular}{llll}
Iteration & Norm 1 & Norm 2 & Norm $\infty$ \\
\hline
1 & 0.1169 & 0.0410 & 0.1344\\
2 & 0.0287 & 0.0108 & 0.0335\\
3 & 0.0050 & 0.0024 & 0.0062\\
4 & 0.0010 & 4.8E-4 & 0.0012\\
5 & 1.7E-4 & 8.6E-5 & 2.0E-4\\
6 & 2.6E-5 & 1.4E-5 & 3.0E-5\\
7 & 3.7E-6 & 2.1E-6 & 4.2E-6\\
8 & 5.0E-7 & 3.0E-7 & 5.5E-7\\
9 & 6.7E-8 & 4.0E-8 & 6.9E-8\\
10 & 8.5E-9 & 5.0E-9 & 8.3E-9\\
\hline
\end{tabular}
\end{subfigure}
\caption{Convergence Analysis - Testing 3}\label{convergence_testing3}
\end{figure}


\section{Conclusion}

\noindent In this paper we studied the nonlinear partial differential equation that explains the dynamic of a financial option under a Black-Scholes' model with transaction costs. We extended the general literature on this subject by generalizing the dimension of the option (i.e multi-asset option) and allowing different transaction costs function. Following Perron methodology, we proved the existence of a viscosity solution by finding a proper set of sub and supersolutions of the original problem. Furthermore, we developed a numerical procedure to find an approximate strong solution by following an iterative method. For this purpose, an ADI scheme was developed in order to deal with mixed derivatives and to work under a finite difference approach. Nonetheless, we provided numerical examples by setting different possible asset prices, volatilities and interest rates among others to study how the ADI framework performs and how sensitive the output is to changes in the delta hedging time step. Different expected results are observed after running the simulations. Firstly, as transaction costs are proportional to the second derivatives of the option price, the transaction costs function reaches its maximum near the at-the-money region. Secondly, it is seen that the transaction costs function explodes when the frequency of rebalancing the replicant portfolio tends to infinity (and $\Delta_{TC}$ goes to zero). Finally, we observe that given the three proposed testing frameworks, the iterative method converges after less than seven iterations.\\

\section{Acknowledgement}

\noindent This work was partially supported by project CONICET PIP 11220130100006CO and project UBACYT 20020160100002BA.

\clearpage
\appendix
\section{Differential matrix calculation steps}\label{appendix}

\noindent Result \eqref{DF_1} follows from these steps:

\begin{align}
\frac{\partial}{\partial B_{kl}}  \, tr\left( \frac{1}{2}  A \, B \right) &=   \frac{1}{2} \sumai \sumaj A_{ij} \frac{\partial B_{ji}}{\partial B_{kl}}, \nonumber\\
&=  \frac{1}{2} \sumai \sumaj A_{ij} \delta_{jk} \delta_{il}, \nonumber \\
&=  \frac{1}{2} A_{lk} =  \frac{1}{2} A_{kl}.
\end{align}

\noindent Result \eqref{DF_2} follows from these steps:

\begin{align}
\frac{\partial}{\partial B_{lm}} \sqrt{\sumaj \sumak B_{ij} \, A_{jk} \, B_{ki}} &= \frac{1}{2}\left(\sumaj \sumak B_{ij} \, A_{jk} \, B_{ki}\right)^{-1/2} \frac{\partial}{\partial B} \sumaj \sumak B_{ij} \, A_{jk} \, B_{ki} \nonumber \\
&= \frac{1}{2}\left(\sumaj \sumak B_{ij} \, A_{jk} \, B_{ki}\right)^{-1/2} \sumaj \sumak \left( \frac{\partial B_{ij}}{\partial B_{lm}} A_{jk} B_{ki} + B_{ij} A_{jk} \frac{\partial B_{ki}}{\partial B_{lm}} \right) \nonumber \\
&= \frac{1}{2}\left(\sumaj \sumak B_{ij} \, A_{jk} \, B_{ki}\right)^{-1/2} \sumaj \sumak \left( \delta_{il} \delta_{jm} A_{jk} B_{ki} + B_{ij} A_{jk} \delta_{kl} \delta_{im} \right) \nonumber \\
&= \frac{1}{2}\left(\sumaj \sumak B_{ij} \, A_{jk} \, B_{ki}\right)^{-1/2} \left[ \sumak A_{mk} B_{kl} + \sumaj B_{mj} A_{jl}\right] \nonumber \\
&= \frac{1}{2}\left(\sumaj \sumak B_{ij} \, A_{jk} \, B_{ki}\right)^{-1/2} \sumak A_{mk} B_{kl} + \sumaj B_{mk} A_{kl} \nonumber \\
&=  \frac{1}{2}\left(\sumaj \sumak B_{ij} \, A_{jk} \, B_{ki}\right)^{-1/2} \left[ \left(AB\right)_{ml} + \left(BA\right)_{ml} \right].
\end{align}

\noindent If we denote 

\begin{align}
H_i\left(y\right) = \sqrt{2 \, \Delta t \, \sumaj \sumak B_{ij} \, A_{jk} \, B_{ki}}  \, y
\end{align}

\noindent then, result \eqref{DF_3} follows from these steps:

\begin{align}
\frac{\partial}{\partial B_{lm}} C\left(H_i\left(y\right) \right) &= C'\left(H_i\left(y\right) \right) \frac{\partial}{\partial B_{lm}} H_i\left(y\right)  \nonumber \\
&= C'\left(H_i\left(y\right) \right) \, y \, \frac{1}{2} \left(2 \Delta t \Theta_i \right)^{-1/2} \, 2 \, \Delta t \frac{\partial}{\partial B_{lm}} \sumaj \sumak B_{ij} A_{jk} B_{ki} \nonumber \\
&= C'\left(H_i\left(y\right) \right) \, y \, \frac{1}{2} \left(2 \Delta t \Theta_i \right)^{-1/2} \, 2 \, \Delta t \left[ AB + BA \right] 
\end{align}

\clearpage
\addcontentsline{toc}{chapter}{Bibliography}

\bibliography{viscosity_version.bbl}
\bibliographystyle{plain}

\end{document}